\documentclass[nofootinbib,twocolumn,floatfix
]{revtex4-2}
\usepackage{xcolor}
\usepackage{subcaption}
\usepackage{amsmath,amsthm,amssymb}
\usepackage[normalem]{ulem}

\newcommand{\WpW}{\%\,\mathrm{W}^{-1}}
\usepackage{graphicx}
\usepackage{dcolumn}
\usepackage{bm}

\usepackage[justification=raggedright,
            singlelinecheck=false,
            font=small,
            labelfont=bf]{caption}

\begin{document}

\preprint{APS/123-QED}
\title{Sidewall-Poled Nanophotonic Lithium Niobate with Bidirectional Characterization}
\author{
Aditya Tripathi$^{1,\dagger}$,
Michael S. Bullock$^{1,4,\dagger}$,
Parash Thapalia$^{2, \dagger}$,
Pooja Kulkarni$^{2}$, 
Jaber Balalhabashi$^{6}$,
Robert Kwolek$^{1}$,
Shiva Behzadfar$^{2}$,
Kazuki Hirota$^{1}$,
Shion Eto$^{1}$, Rintaro Tsuchida$^{1}$,
Liam Beaudoin$^{1,7}$, 
Sasan Fathpour$^{2,3}$,
and Rajveer Nehra$^{1,4,5,6^\ast}$
}

\address{
$^{1}$Department of Electrical and Computer Engineering, University of Massachusetts Amherst, Amherst, MA 01003, USA\\
$^{2}$CREOL, The College of Optics and Photonics, University of Central Florida, Orlando, FL 32816, USA\\
$^{3}$Department of Electrical and Computer Engineering
University of Central Florida, Orlando, FL 32816, USA\\
$^{4}$College of Information and Computer Sciences, University of Massachusetts Amherst, Amherst, MA 01003, USA\\
$^{5}$Department of Physics, University of Massachusetts Amherst, Amherst, MA 01003, USA\\
$^{6}$Materials Science and Engineering Program, University of Massachusetts Amherst, Amherst, MA 01003, USA\\
$^{7}$Center for Photonic Communication and Computing, ECE Department, Northwestern University,
 Evanston, Illinois 60208, USA\\
 $^{\dagger}$These authors contributed equally to this work. \\
$^\ast$Email: rajveernehra@umass.edu\\
}

\date{\today} 

\begin{abstract}

Accurate characterization of integrated nonlinear photonic devices is often limited by unknown facet-coupling losses and fabrication-induced non-uniformities, leading to systematic over- or underestimation of the intrinsic on-chip performance. Here, we present and demonstrate a unified bidirectional characterization framework that exploits nonlinear interactions under forward and backward propagation to independently extract facet-specific coupling efficiencies, intrinsic nonlinear conversion efficiency, and the longitudinal quasi-phase-matching profile using only classical power measurements. We experimentally validate the method using sidewall-poled thin-film lithium niobate waveguides, obtaining a normalized second-harmonic generation efficiency of $(1850 \pm 20)~\%\mathrm{W}^{-1}$ while simultaneously demonstrating broadband non-degenerate optical parametric amplification and parametric generation spanning more than $10~\mathrm{THz}$.  Our framework is non-destructive, relies solely on classical power measurements, requires neither time-intensive microscopy nor calibrated internal references, and is compatible with wafer-scale testing, providing a general route to rigorous benchmarking and high-throughput characterization of photonic devices across material platforms.

\end{abstract}
\maketitle
\section{Introduction}\label{sec:level1}
\vspace{-3mm}
Integrated photonics offers a scalable pathway for both quantum and classical optical technologies by enabling the integration of photonic components, including light sources, reconfigurable interferometric circuits, optical delays, and photodetectors~\cite{chrostowski2015silicon}. Beyond miniaturization and high-density integration, photonics offers a key scalability advantage through spatially, frequency-, and time-multiplexed architectures without a proportional increase in physical footprint~\cite{obrien2009photonic, wang2020integrated}.  Silica and silicon nitride (SiN) photonics have enabled significant advances in large-scale optical systems, including programmable quantum circuits for generating, manipulating, and measuring nonclassical states of light~\cite{arrazola2021quantum, vaidya2020broadband, psiquantum2025manufacturable, yang2021squeezed}. These platforms benefit from mature CMOS-compatible foundry infrastructure, wafer-scale manufacturability, and ultralow propagation losses. However, their nonlinear functionality relies on intrinsically weak third-order ($\chi^{(3)}$) optical nonlinearities, which therefore require either high optical pump power or resonant structures.

In recent years, integrated platforms with second-order ($\chi^2$) optical nonlinearity, such as thin-film lithium niobate (TFLN), indium gallium phosphide (InGaP), and photogalvanic-induced ($\chi^2$) have rapidly emerged for frequency conversion, parametric amplification, and quantum light sources~\cite{nehra2022few, ledezma2022intense,jankowski2022quasi, hu2025efficient, li2025down, dean2026low}. As the nonlinear performance of these platforms continues to improve, the ability to accurately benchmark their intrinsic performance is equally important. Conventional characterization is typically performed using single-pass measurements, in which the measured conversion efficiency is inferred from the launched and collected optical powers. However, coupling losses often dominate the overall insertion loss and can vary between the input and output facets due to fabrication, polishing, and alignment errors. Consequently,  measured efficiencies are an inseparable combination of intrinsic conversion, propagation loss, and asymmetric coupling losses, leading to systematic biases in device performance. These uncertainties become increasingly significant when comparing nonlinear devices across a wafer, benchmarking fabrication processes, or accurately predicting the performance of quantum circuits, where even sub-decibel errors can lead to substantial inaccuracies in the inferred squeezing, entanglement generation, and quantum frequency-conversion efficiencies.

In particular, TFLN has rapidly emerged as a leading platform due to its strong optical nonlinearity, large electro-optical response,  and dispersion engineering capabilities~\cite{Wang2018, Zhang2017, vazimali2022applications,boes2023lithium,nehra2022few, jankowski2022quasi, ledezma2022intense, boes2018status}. In TFLN, efficient three-wave mixing has traditionally been achieved either through modal phase matching (MPM) or through quasi-phase matching (QPM). MPM exploits interactions between different spatial modes, but typically requires carefully engineered waveguide geometries and often sacrifices modal purity~\cite{boes2023lithium,arge2025demonstration}. Periodic poling, on the other hand, enables QPM while maintaining fundamental-mode operations, thereby accessing the largest nonlinear coefficient. However, a critical challenge in TFLN quantum photonics is poling-induced loss in widely employed \textit{poling-before-etching} approaches. In these devices, periodically inverted domains can exhibit differential etching, leading to sidewall corrugations and increased roughness~\cite{Rao:16}. Although these losses can be partially reduced by employing shallower etching and wider waveguides, such compromises reduce optical confinement and increase multimode operations~\cite{khalatpour2026roughness}. These limitations have motivated etching-before-poling strategies, such as sidewall poling, to achieve higher yields of QPM at desired wavelengths while enabling low-loss waveguides~\cite{Rao2019, Franken2026, sabatti2025nanodomain}. As these techniques continue to advance, it becomes critical to develop non-destructive, time-efficient in-situ optical measurements techniques to rigorously characterize the nonlinear performance of these devices. 

In this work, we introduce and experimentally demonstrate a bidirectional nonlinear characterization framework that leverages cascaded second-harmonic generation (SHG) and non-degenerate optical parametric amplification (NOPA) to unbiasedly determine the input and output coupling efficiencies and on-chip nonlinear performance of sidewall-poled TFLN devices.
By independently pumping the cascaded SHG-OPA interaction from opposite propagation directions, our approach decouples facet-dependent coupling losses from the intrinsic nonlinear response, enabling rigorous and unbiased extraction of on-chip nonlinear efficiencies. Beyond accurate efficiency measurements, the framework provides direct insight into the dominant mechanisms that limit device performance, including longitudinal thin-film non-uniformities and associated phase-matching variations, thereby establishing a practical methodology for the characterization and optimization of integrated nonlinear photonic circuits. Complementing this nonlinear characterization, we employ resonator-based measurements to directly compare conventional \textit{poling-before-etching} and \textit{sidewall-poling} fabrication strategies and quantify the excess loss introduced by periodic poling. For near-single-mode waveguides ($\sim1.3~\mu\mathrm{m}$ wide) at telecom wavelengths, we find that sidewall poling substantially suppresses poling-induced loss compared to the \textit{poling-before-etching} approach, while offering strong nonlinear interactions.

The remainder of this paper is organized as follows. Section~\ref{sec: Seciii_nonlinear} introduces the proposed bidirectional measurement framework for nonlinear characterization and discusses its application to extract unbiased nonlinear conversion efficiencies and phase-matching non-uniformities. We discuss the performance of our sidewall-poled devices in Section~\ref{subsec:gain}. Finally, Section~\ref{sec:conclusion} summarizes the key results and outlines future directions for low-loss nonlinear TFLN devices.

\section{Robust Bidirectional Nonlinear Characterization}\label{sec: Seciii_nonlinear}
Accurate benchmarking of integrated nonlinear photonic devices requires separating the intrinsic nonlinearity from the extrinsic optical losses of the measurement interface. Chip-to-fiber coupling losses often dominate total insertion loss and differ substantially between facets due to fabrication, polishing, and alignment asymmetries. Conventional single-pass measurements conflate these losses with propagation and nonlinear conversion efficiencies, biasing estimates of on-chip performance for quantum applications such as single-photon and squeezed-light generation. Meanwhile, characterizing longitudinal non-uniformities, which degrade quasi-phase matching, relies on time-intensive microscopy, SH imaging, or ellipsometric mapping that do not readily scale to wafer-level testing~\cite{zhao2023unveiling}. We address both challenges with a unified bidirectional characterization framework. Using only classical phase-insensitive power measurements, our framework independently extracts facet-specific coupling losses (Sec.~\ref{subsec:bd-description}), absolute SHG conversion efficiency (Sec.~\ref{subsec:SHG}), and longitudinal phase-matching uniformity (Sec.~\ref{subsec:poling-non-uniformity}), without requiring access to internal fields or destructive imaging. 

\subsection{Bidirectional Measurements for Unbiased Estimation of Coupling Efficiencies}\label{subsec:bd-description}
Conventional insertion-loss measurements provide access only to the total transmission efficiency $\eta\approx\eta_1\eta_2$, where $\eta_1$ and $\eta_2$ represent the input and output facet coupling efficiencies, respectively. Here, we introduce a bidirectional characterization approach that exploits a cascaded second-harmonic generation (SHG) and non-degenerate optical parametric amplification (NOPA) process to independently resolve facet-specific coupling efficiencies using only phase-insensitive classical power measurements, eliminating the need for loss- and noise-sensitive quantum measurements or above-threshold ring resonators \cite{wu26bnot,Chen2026universal}.

\begin{figure}[h]
    \centering\includegraphics[width=1\linewidth]{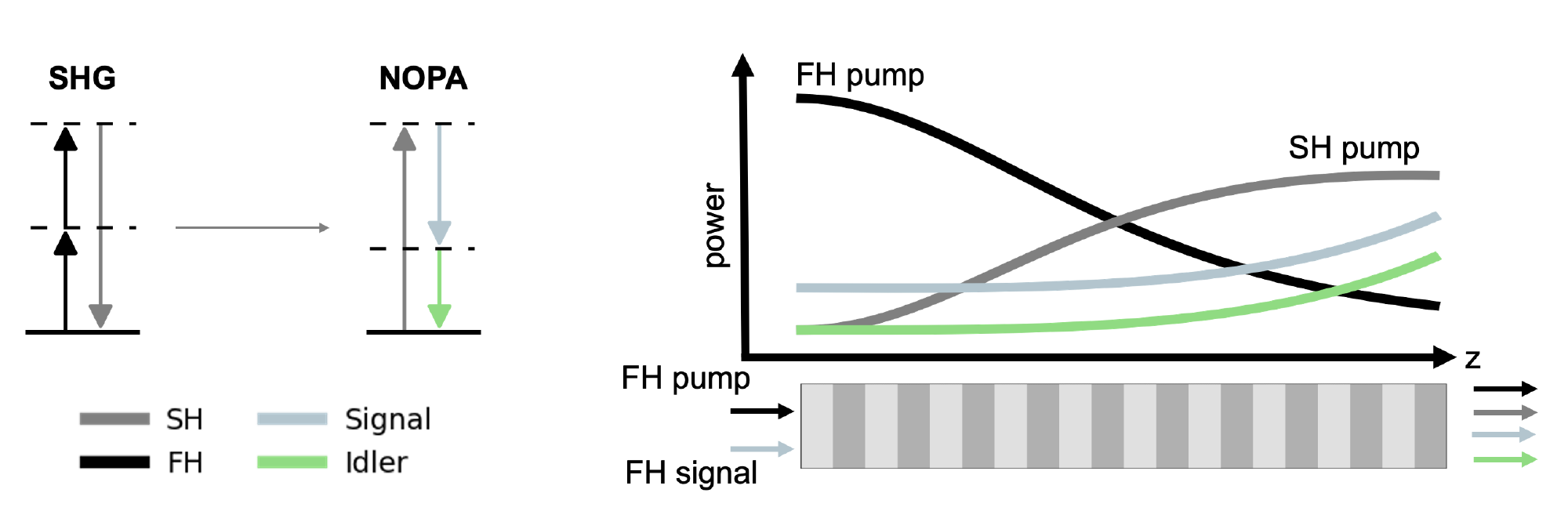}
    \caption{Schematic illustration of the cascaded SHG and NOPA process in a periodically poled lithium niobate waveguide. Left: Energy-level representation of the cascaded interaction. Right: Evolution of the optical powers along the propagation length. The FH pump is progressively depleted as SH power builds up. The generated SH field then drives parametric amplification, resulting in an amplified signal and generated idler fields.}
    \label{fig:nopa_scheme}
\end{figure}

The method employs a cascaded $\chi^{(2)}$ process detailed in Fig.~\ref{fig:nopa_scheme}: a fundamental harmonic (FH) pump at $\omega$ generates a second-harmonic pump field, which simultaneously drives NOPA of a co-propagating signal field at $\omega_s$, producing an idler at $\omega_i=2\omega-\omega_s$. Because the generated idler power depends nonlinearly on the coupled pump power, unequal facet efficiencies lead to markedly different detected idler mean photon numbers in the two propagation directions. In the low-gain regime, their measured power ratios simplify to 
\begin{equation}
\frac{n_{i,F}}{n_{i,B}}\approx \left(\frac{\eta_1}{\eta_2}\right)^2.
\end{equation}
Taking the square root, therefore, provides a direct closed-form estimate of the facet coupling asymmetry given by
\begin{equation}
\hat{R} = \sqrt{\frac{n_{i,F}}{n_{i,B}}} \approx \frac{\eta_1}{\eta_2}.
\label{eq:ratio_estimator}
\end{equation}
Combined with independently measured throughput $\eta_1\eta_2$, \eqref{eq:ratio_estimator} directly yields individual facet efficiencies. While closed-form and requiring no likelihood model or prior, the ratio estimate becomes biased for highly efficient cascaded interactions and strongly asymmetric facets. 

To overcome this, we use a maximum a posteriori (MAP) estimator that incorporates the prior on $\eta_1\eta_2$, defined as
\begin{align}
    (\hat\eta_1,\hat\eta_2) = \underset{\eta_1,\eta_2}{\mathrm{arg\,max}} \left[\max_{\kappa_{\mathrm{SHG}}}f(\mathbf{n}_{i}|\eta_1,\eta_2,\kappa_{\mathrm{SHG}})\,p(\eta_1\eta_2)\right],\label{eq:map}
\end{align} 
where $f(\mathbf{n}_{i}|\eta_1,\eta_2,\kappa_{\mathrm{SHG}})$ is the likelihood of the measured forward and backward idler nonlinear responses $\mathbf{n}_{i}$, $\kappa_{\mathrm{SHG}}$ is the effective nonlinear conversion parameter (treated as a nuisance parameter), and $p(\cdot)$ is the prior on total transmission $\eta_{\mathrm{tot}}=\eta_1\eta_2$ from propagation loss corrected linear transmission measurements. Full details and numerical studies are provided in the supplementary material~\ref{app:bd_npa}.

In the setup of Fig.~\ref{fig:bidir_nopa}a, a continuous-wave FH pump at 1536.75 nm is amplified using an erbium-doped fiber amplifier (EDFA), combined with a weak signal at 1537.50 nm through a 90:10 fiber coupler, and coupled into and out of the PPLN waveguide using single-mode lensed fibers. The idler generated is spectrally isolated by a 50~GHz narrowband filter and detected on a photodiode, while a fraction of the output is monitored using an optical spectrum analyzer (OSA). 

We apply this estimation method to a waveguide with a visible defect near the left facet. Fig.~\ref{fig:bidir_nopa}b shows $\sim$6~dB asymmetry between forward and backward-pumped idler powers. The MAP estimator yields $(\hat\eta_1,\hat\eta_2)$, attributing input/output facet losses of $8.9$ and $5.8$~dB and thus $3.1$~dB of excess loss at the defect. We verify the method by adding known loss via a variable optical attenuator (VOA) on the right facet, and plot the MAP and ratio estimator outputs against VOA-induced loss in Fig.~\ref{fig:bidir_nopa}c,d. The MAP estimator attributes the added loss to the right facet while the left facet estimate remains fixed, confirming that the two facets are independently characterized. The ratio estimator agrees when the facet losses are approximately equal, but acquires a slight bias when they are highly asymmetric. 

\begin{figure*}[!t]
    \centering
    \includegraphics[width=0.95\linewidth]{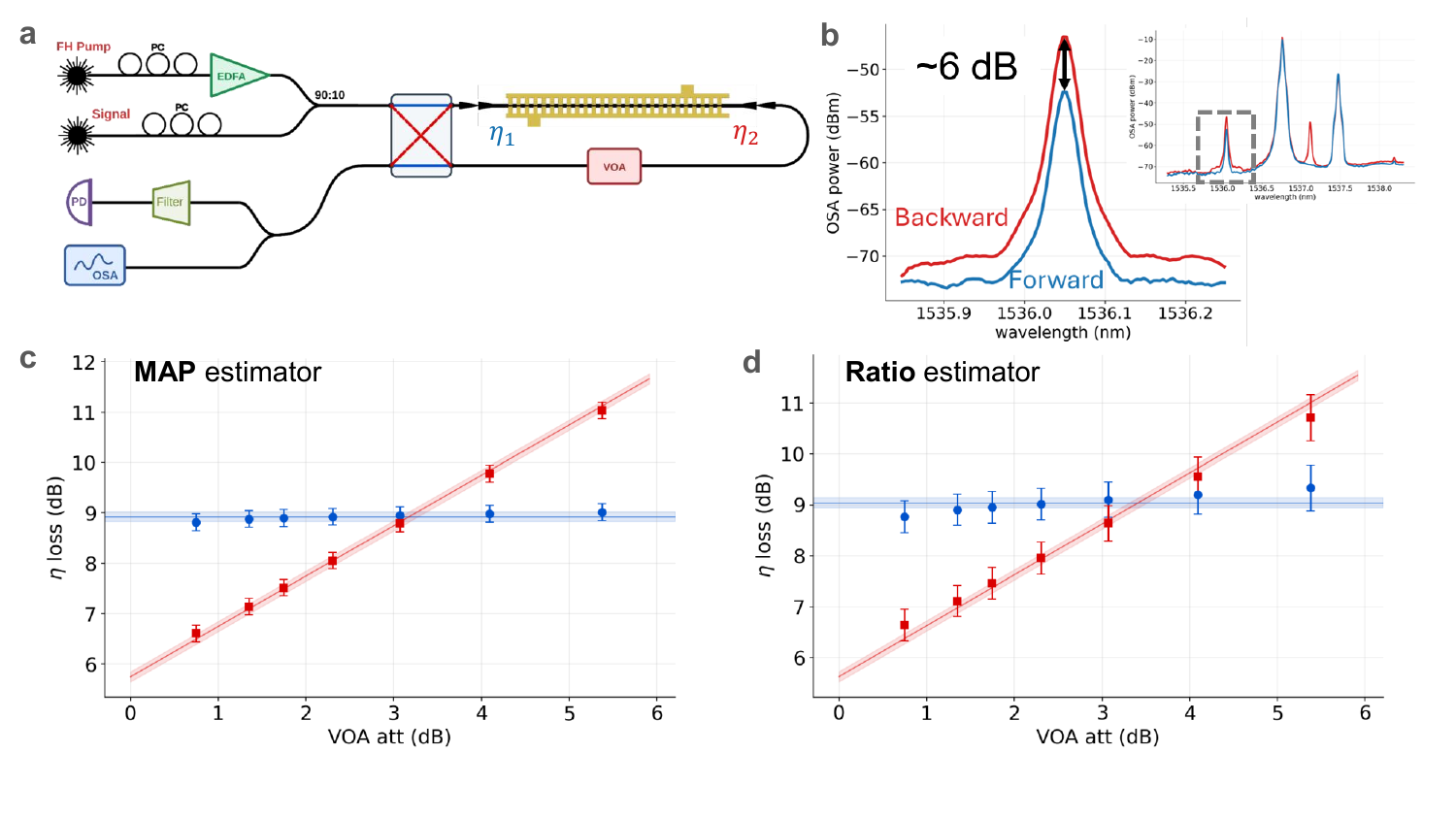}
    \caption{Bidirectional characterization of coupling losses. (a) Experimental setup for bidirectional facet-loss characterization. The fundamental-harmonic (FH) pump and signal are combined and launched into a PPLN waveguide, where cascaded SHG and NOPA generate the idler field. A switch reverses the propagation direction through the device, while a VOA is inserted at the right input facet to emulate controlled coupling loss. (b) Measured optical spectra of the generated idler for forward (blue) and backward (red) propagation. Inset: Full output spectrum showing the FH pump, signal, and generated idler. (c) Estimated facet coupling efficiencies, $\eta_1$ (blue) and $\eta_2$ (red), extracted using the MAP estimator as a function of the added VOA attenuation.  (d) Individual facet loss estimates of $\eta_1$ (blue) and $\eta_2$ (red) for the ratio estimator against VOA added attenuation. 
    }
    \label{fig:bidir_nopa}
\end{figure*}

\subsection{Second-Harmonic Generation Conversion Efficiency Characterization}\label{subsec:SHG}

Unbiased estimation of the intrinsic normalized SHG efficiency $\eta_{\mathrm{norm}}$ must account for coupling efficiencies that differ both between facets and between the FH and SH wavelengths. We denote by $\eta_{\omega,j}$ and $\eta_{2\omega,j}$ the FH and SH coupling efficiencies at facet $j\in\{0,1\}$, where we introduce the wavelength $\{\omega,2\omega\}$ index explicitly. A unidirectional measurement assuming symmetric facets, $\eta_{\omega,1}=\eta_{\omega,2}$ and $\eta_{2\omega,1}=\eta_{2\omega,2}$, yields a biased estimate of the on-chip efficiency. As in Sec.~\ref{subsec:bd-description} above, we remove this bias by combining forward and backward measurements.  

Assuming negligible propagation loss, a forward (resp. backward) pass couples the FH pump through facet 1 (resp. 2) and collects the SH through facet 2 (resp. 1). In the undepleted regime, fitting the detected SH power $P_{2\omega}$ against the square of the launched FH power $P_{\omega}$ yields the apparent direction-dependent conversion efficiencies
\begin{align}
    \eta_{\mathrm{app},F} &= \eta_{2\omega,2}\eta_\mathrm{norm}(\eta_{\omega,1})^2\\
    \eta_{\mathrm{app},B} &= \eta_{2\omega,1}\eta_\mathrm{norm}(\eta_{\omega,2})^2.
\end{align}
Each estimate is biased by their coupling factors, but their product depends on the coupling only through the total throughputs per-wavelength $\eta_{\omega}\equiv\eta_{\omega,1}\eta_{\omega,2}$ and $\eta_{2\omega}\equiv\eta_{2\omega,1}\eta_{2\omega,2}$, accessible from linear insertion-loss measurements. Normalizing the geometric mean by these throughputs cancels the facet asymmetry, 
\begin{align}
    \sqrt{\frac{\eta_{\mathrm{app},F}\eta_{\mathrm{app},B}}{\eta_\omega^2\eta_{2\omega}}} &= \eta_\mathrm{norm}. \label{eq:eta_norm_bd}
\end{align}
The result is independent of how the total coupling loss is partitioned between the two facets, so no assumption of facet symmetry is required~\cite{karnik202618dbonchipvacuumsqueezing}.

Using the setup of Fig.~\ref{fig:bidir_qpm}a, the forward and backward fits give biased apparent efficiencies $\eta_{\mathrm{app},F}=134~\WpW$ and $\eta_{\mathrm{app},B}=217~\WpW$, whose facet-asymmetry-corrected geometric mean yields an unbiased normalized conversion efficiency of $\eta_{\mathrm{norm}}=170~\WpW$. This estimator operates in the undepleted regime and requires the throughputs at both SH and FH, which can be difficult to obtain for devices with integrated routing such as wavelength division multiplexers (WDMs). 
When the device can instead be driven to pump depletion, $\eta_{\mathrm{norm}}$ is obtained from a single propagation direction using only the input FH facet loss, itself estimated using the MAP estimator of Section~\ref{subsec:bd-description}. 
In the lossless-propagation limit (supplementary material ~\ref{app:bd_npa}, Eq.~\eqref{eq:pump-depletion}) the on-chip SH power is 
\begin{align}
    P_{2\omega} &= P_\omega^{\mathrm{chip}}\tanh^{2}\left(\sqrt{\eta_{\mathrm{norm}}\,P_\omega^{\mathrm{chip}}}\right),
    \label{eq:sh-depletion}
\end{align}
where $P_\omega^{\mathrm{chip}}=\eta_\omega^{\mathrm{in}}P_\omega$ is the on-chip FH power set by the input-facet coupling $\eta_\omega^{\mathrm{in}}$ alone. 
The low-power quadratic slope is consistent with any value of the SH-arm coupling and cannot fix the absolute scale. 
The approach to saturation in \eqref{eq:sh-depletion}, by contrast, is pinned by energy conservation to $P_\omega^{\mathrm{chip}}$, so fitting the full $\tanh^2(.)$ curve determines $\eta_{\mathrm{norm}}$ absolutely from a single direction, without an independent SH-arm calibration or a backward measurement. Although \eqref{eq:sh-depletion} assumes lossless propagation, the FH- and SH-loss contributions bias the fitted $\eta_{\mathrm{norm}}$ in opposite directions and largely cancel. Loss corrections are more important for the estimation of absolute on-chip conversion efficiency $P_{2\omega}/P_{\omega}^{\rm chip}$~\cite{Franken2026}. For the high-efficiency $2.1~\mathrm{cm}$ device of Sec.~\ref{subsec:gain}, we therefore numerically integrate using loss-included coupled mode equations.

\begin{figure*}[!tbh]
    \centering
    \includegraphics[width=0.95\linewidth]{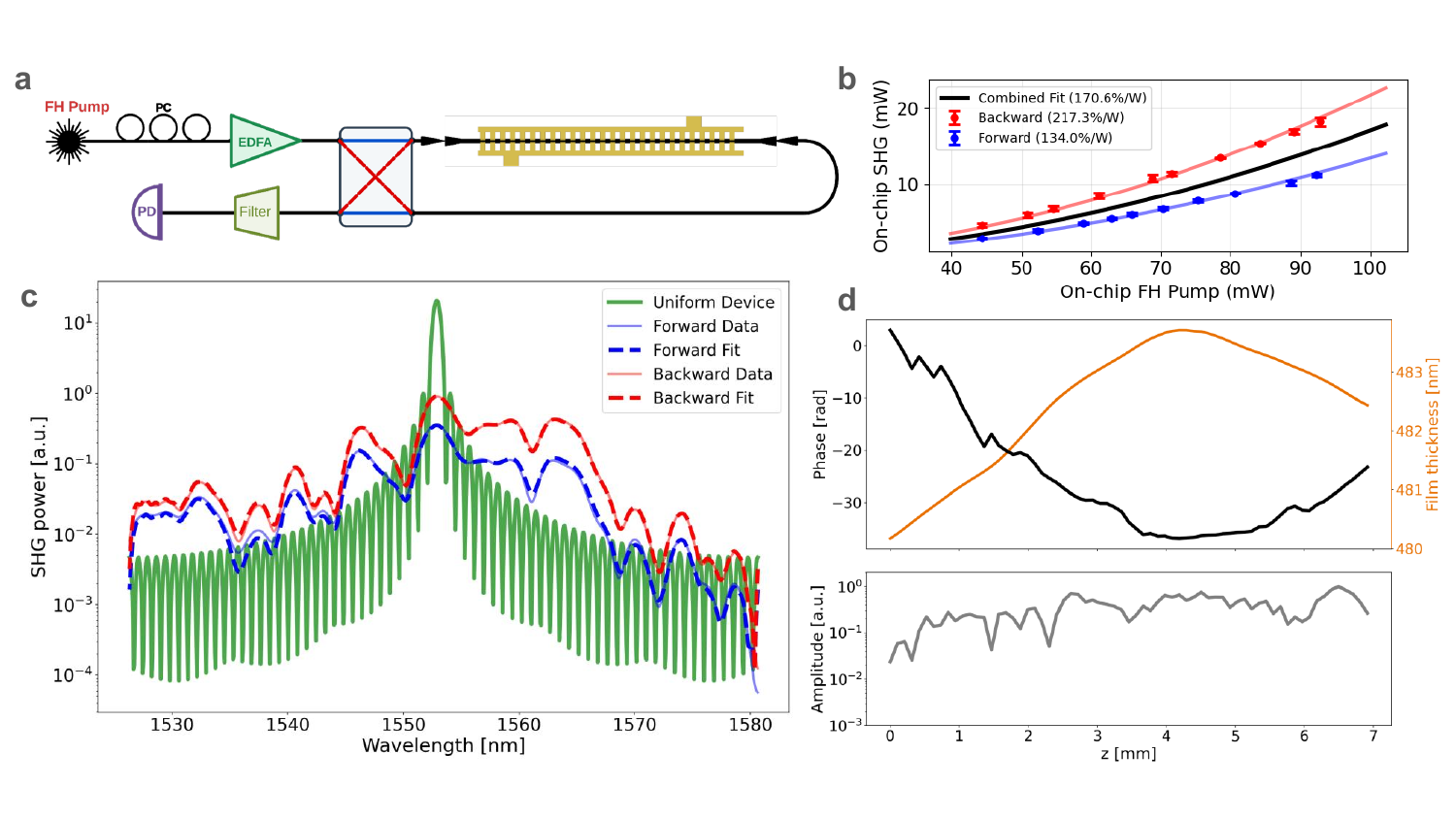}
    \caption{Bi-directional characterization of the SHG conversion efficiency and longitudinal quasi-phase matching nonuniformity. (a) Measurement setup for characterization. The FH pump is amplified and sent to a switch that controls the propagation direction, and the output generated by the SH is filtered and measured with a power meter. (b) Bi-directional SHG conversion efficiency characterization, where unbiased  SHG conversion efficiency is estimated as $170~\WpW$. (c) Longitudinal quasi-phase matching non-uniformity fits for forward (blue) and backward (red) propagation directions, along with simulated phase mismatch corrected QPM curve (green). (d) Upper: estimated longitudinal phase-mismatch function $h(z)$ (black), along with the measured ellipsometry data showing thin-film thickness variation for the device (orange). Lower: amplitude of estimated $h(z)$. }
    \label{fig:bidir_qpm}
\end{figure*} 

\subsection{Longitudinal Phasematching Non-uniformity Characterization}\label{subsec:poling-non-uniformity}
Beyond extracting the absolute SHG efficiency, the bidirectional measurement framework provides access to the longitudinal structure of the quasi-phase-matching grating. Imperfections such as duty-cycle errors, domain-wall position deviations, and phase mismatch from thin film thickness variation manifest as a complex-valued envelope $h(z)$ modulating the ideal periodic poling through the nonlinear coupling coefficient $d(z)$ as
\begin{align}
    d(z) = h(z) e^{i K_g z} + c.c. \label{eq:dz-maintext}
\end{align}
with $K_g = \frac{2\pi}{\Lambda}$ and where $h(z)$ is a square-integrable,
complex-valued envelope that varies slowly compared to the poling period $\Lambda$. 
Recovering $h(z)$ from intensity-only measurements constitutes a phase-retrieval problem. Here, we show that bidirectional SHG spectra supply sufficient information to solve it uniquely. In the undepleted-pump regime, the forward and backward SHG efficiencies are proportional to the squared magnitudes of the $z$-transform of $h(z)$ evaluated on two concentric circles in the complex plane (supplementary material~\ref{sec:appendix-uniqueness}, Eqs.~\eqref{eq:etaF} and~\eqref{eq:etaB}):
\begin{align}
    \eta_F(\Delta k) &\propto \left|H(r_1 e^{-i\phi})\right|^2, \label{eq:etaF-zt}\\
    \eta_B(\Delta k) &\propto \left|H(r_2 e^{-i\phi})\right|^2, \label{eq:etaB-zt}
\end{align}
with radii $r_1 = e^{\alpha \Delta z}$ and $r_2 = e^{-\alpha \Delta z}$ set by the effective loss asymmetry $\alpha = 2\alpha_\omega - \alpha_{2\omega}$, where $\alpha_{\omega}$ and $\alpha_{2\omega}$ are the propagation loss coefficients at FH and SH, respectively, as in \eqref{eq:alpha-eff}, and where $\phi \equiv \Delta k\,\Delta z$ is the normalized phase mismatch (supplementary material~\ref{sec:appendix-uniqueness}). As we prove in the supplementary material~\ref{sec:appendix-uniqueness} (Lemma~\ref{lem:main}), knowledge of $|H|$ on two circles of distinct radii determines $H$ uniquely up to a global phase, eliminating the reflection ambiguity inherent to single-measurement phase retrieval (Lemma~\ref{lem:single-circle}). The spatial resolution of the reconstruction is $\Delta z = 2\pi / \Delta k_{\mathrm{span}}$, where $\Delta k_{\mathrm{span}}$ is the experimentally accessible range of phase mismatch, tuned here via the pump wavelength.

We apply the inversion to a PPLN waveguide \textit{without adaptive poling} using the setup of Fig.~\ref{fig:bidir_qpm}a.  Sweeping the FH pump wavelength in each direction yields the forward and backward SHG efficiency spectra shown in Fig.~\ref{fig:bidir_qpm}c (blue and red traces). 

Discretizing the waveguide into $N$ segments, we solve the two-circle phase-retrieval problem numerically for the samples $h[n] = h(n\,\Delta z)$. The fitted spectra, together with a simulated phase-corrected $h(z)=|h(z)|$ curve (green), are shown in Fig.~\ref{fig:bidir_qpm}c, and the reconstructed envelope in Fig.~\ref{fig:bidir_qpm}d. Its recovered phase (upper panel, black) exhibits a slowly varying spatial drift that matches ellipsometry measurements of the thin-film thickness variation (orange), while the recovered amplitude (lower panel) remains relatively uniform, indicating that duty-cycle errors are modest compared with the accumulated phase drifts.\\

This observation motivates the two-pronged strategy employed for the high-efficiency devices reported in Sec.~\ref{subsec:gain}. First, wafer-scale dicing (supplementary material~\ref{app:tf-char}) minimizes the film thickness RMSE along the propagation direction. Second, post-etching adaptive poling adjusts $\Lambda(z)$ to cancel residual phase mismatch (supplementary material~\ref{app: Fabrication and Poling}). The bidirectional reconstruction demonstrated here closes the loop on this strategy: by non-invasively recovering the longitudinal phase profile from classical power measurements alone, it provides a direct, wafer-compatible diagnostic of the residual non-uniformity left after dicing and adaptive poling, and a means to verify that the remaining drift is small enough to support the high absolute conversion efficiencies and broadband parametric gain reported in the following section.

\section{Low-loss, Efficient Nonlinear Devices}
\label{subsec:gain}
\begin{figure*}
    \centering
    \includegraphics[width=0.95\linewidth]{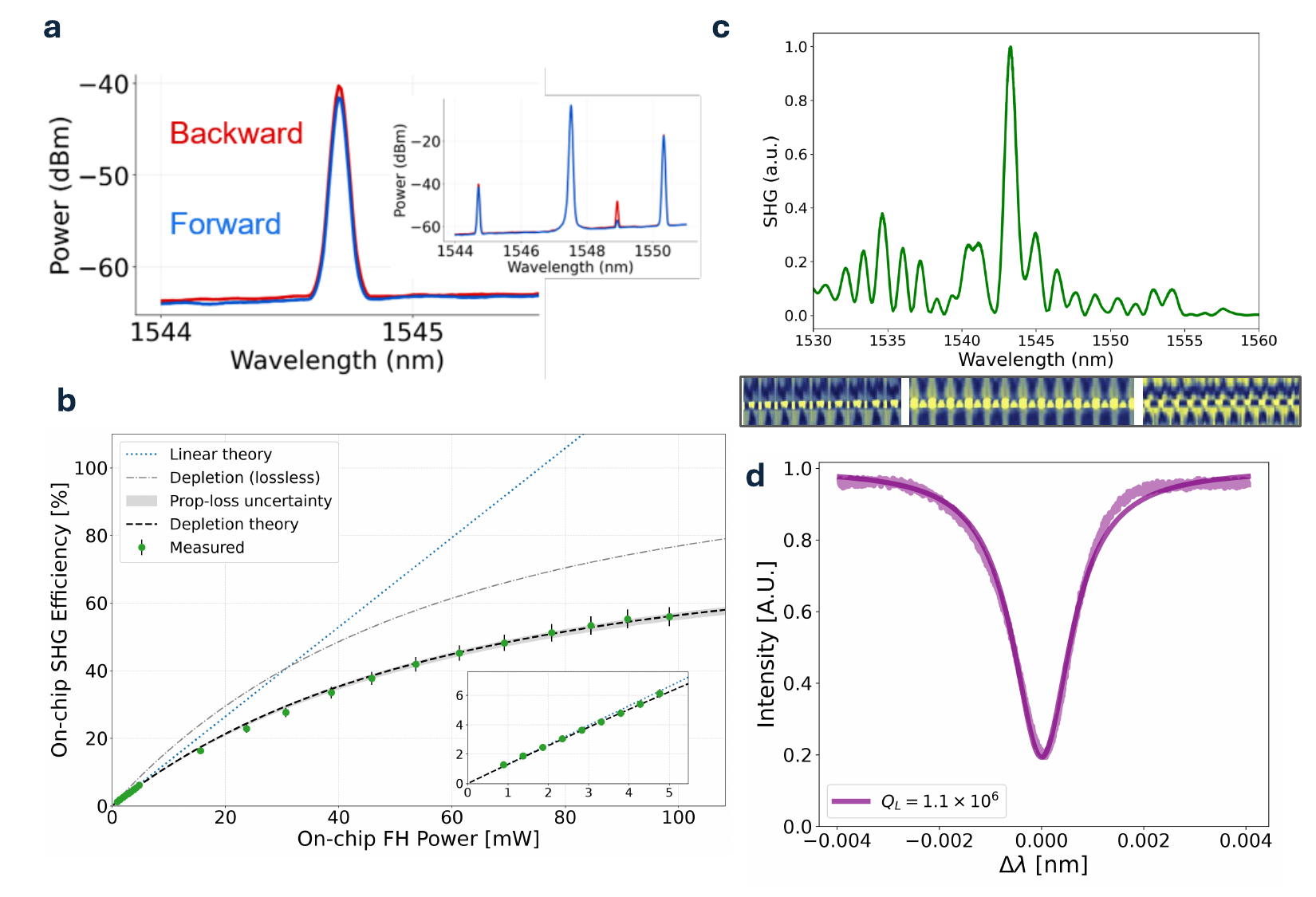}
    \caption{Characterization of the nonlinear and linear performance of sidewall-poled waveguides. (a) Bidirectional cascaded SHG--NOPA measurements used to determine the facet-specific coupling losses. Forward (blue) and backward (red) optical spectra exhibit different idler powers arising from asymmetric chip-to-fiber coupling.
(b) Absolute on-chip SHG conversion efficiency as a function of on-chip FH pump power. The solid line is a fit to the pump-depletion model, with 0.25 dB/cm FH propagation loss and 0.9 dB/cm SH propagation loss. \emph{Inset:} Low-power linear regime.
(c) Top: Measured SHG QPM spectrum, exhibiting a narrow phase-matching response indicative of good longitudinal poling uniformity. Bottom: Representative confocal microscopy images acquired at different positions along the waveguide, revealing the periodically poled domain structure together with local duty-cycle variations, lateral domain broadening, and incomplete domain penetration.
(d) Measured resonator response fabricated on the same chip together with the fitted response.}
    \label{fig:CE}
\end{figure*}

We now characterize the fabricated adaptively poled TFLN waveguides. A detailed discussion on the fabrication process and poling is provided in the supplementary material (supplementary material~\ref{app: fab_pole}). 

We characterize linear propagation loss using resonator measurements, while evaluating the nonlinear response through SHG, NOPA, and broadband optical parametric generation (OPG). These measurements collectively demonstrate the high nonlinear efficiency and ultra-broadband phase-matching enabled by sidewall-poled waveguides, with gain bandwidths exceeding those achievable in large mode areas LN devices~\cite{Kashiwazaki2020}.

We first apply the bidirectional MAP estimator of Sec.~\ref{subsec:bd-description} to determine the FH input coupling efficiency. The total FH insertion loss of the device is 9.90~dB, and from the asymmetry between the forward and backward cascaded SHG--NOPA idler power measurements in Fig.~\ref{fig:CE}a we extract an input-facet loss $\eta_{\mathrm{in},\omega}$ of $4.51$\,dB.

Figure~\ref{fig:CE}b shows the measured continuous-wave SHG conversion efficiency of the 2.1-cm-long adaptively poled waveguide. Using calibrated facet coupling efficiencies, SH power is referenced to the on-chip fundamental power and fitted to the numerically integrated loss-included pump-depletion model discussed after Eq.~\eqref{eq:sh-depletion}. 
From the fit, we extract an intrinsic normalized SHG efficiency of
$\eta_{\mathrm{norm}} = 1850 \pm 20~\%\,\mathrm{W}^{-1}$, corresponding to an estimated SH output-facet loss of $9.28\pm0.1~\mathrm{dB}$. The device achieves an absolute on-chip conversion efficiency of approximately $56\%$ at an on-chip fundamental power of $95~\mathrm{mW}$, among the highest reported for sidewall-poled TFLN waveguides operating between 1550 and 775~nm~\cite{Franken2026,sabatti2025nanodomain}. 

The inset shows the expected quadratic scaling $P_{2\omega}\propto P_{\omega}^2$ at low power, with saturation at higher power marking the onset of pump depletion. We emphasize that without bidirectional loss correction, the same data would yield a $\eta_{\mathrm{norm}}$ biased by input-facet loss, underscoring the importance of the proposed bidirectional framework for rigorous absolute benchmarking of integrated nonlinear devices.

Figure~\ref{fig:CE}c shows the SHG QPM spectrum together with confocal microscopy of the poled waveguide. A single, narrow phase-matching peak indicates good longitudinal uniformity of the nonlinear grating over the 2.1-cm nonlinear interaction length. The confocal images reveal a well-defined domain-inversion pattern (middle) alongside local duty-cycle variations (left), lateral domain broadening (right), and partial poling depth, effects also observed in \textit{poling-before-etching} approaches~\cite{Bollmers2025}. We attribute these imperfections to spatial variations in photoresist thickness, which render the coercive field nonuniform for poling. As a result, the effective nonlinearity is below the TFLN platform's intrinsic capability. Nevertheless, the absence of appreciable spectral broadening or additional phase-matching peaks indicates the robustness and reproducibility of the sidewall-poling fabrication process, demonstrated across multiple devices on multiple chips. 

Finally, we characterize the propagation loss of our devices through resonator measurements and obtain extrinsic and intrinsic quality factors of $(Q_e = (3.75 \pm 0.34)\times10^6,\;Q_i =(1.41 \pm 0.14)\times10^6)$ obtained from 6 resonator responses, corresponding to the resonator and coupler. We conservatively attribute the larger of the two to the extrinsic (coupling) and the smaller to the intrinsic quality factor respectively. 
Since the resonator waveguide's top width of 
$1.2~\mu\mathrm{m}$ is narrower than the nonlinear device's ($1.5~\mu\mathrm{m}$), the mode overlap with the etched sidewalls is greater. Thus, the propagation loss of the 
nonlinear device is expected to be lower than this bound. More details on resonator measurements are provided in the supplementary material (supplementary material~\ref{app:sidewall-added-loss}). 

\begin{figure*}[!htb]
    \centering
    \includegraphics[width=0.98\linewidth]{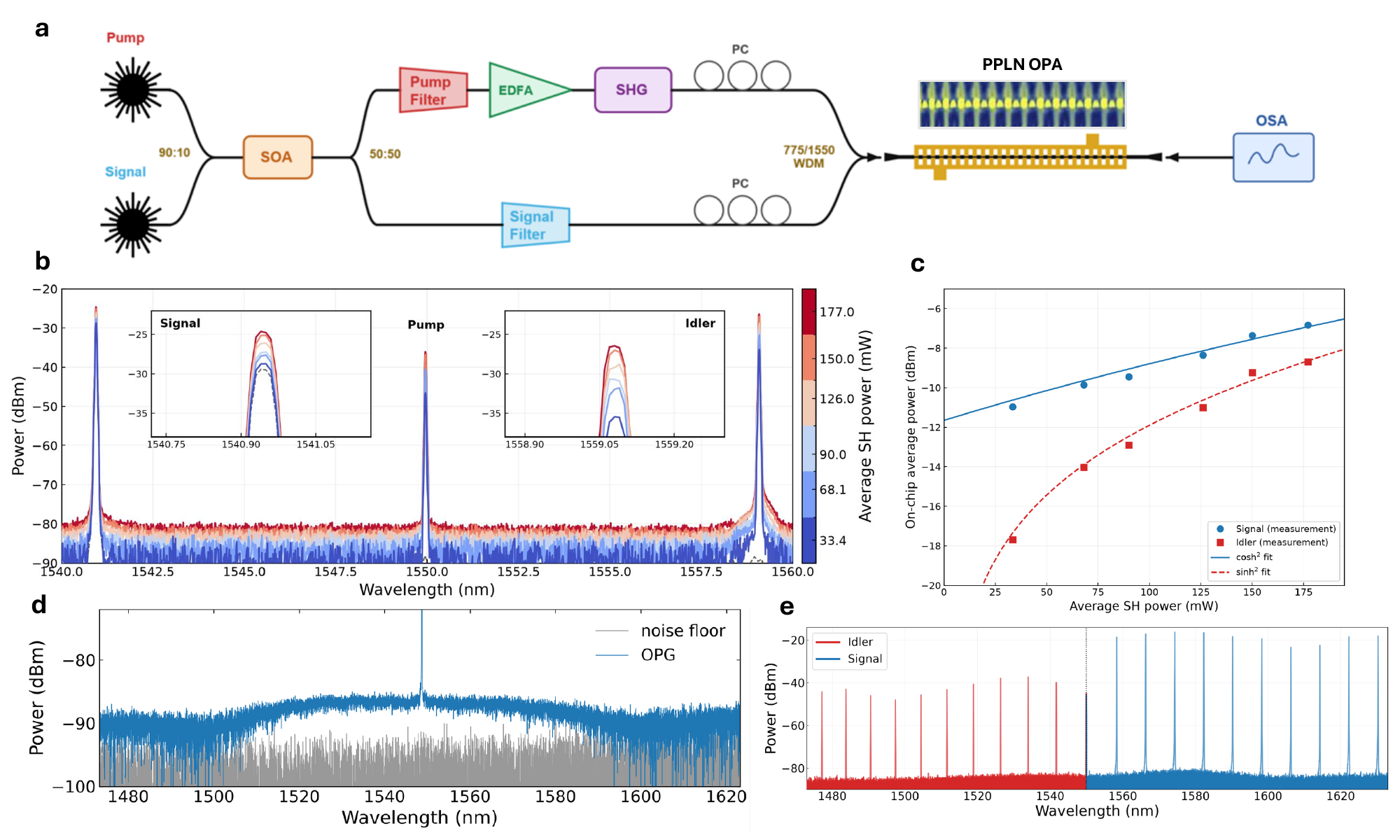}
    \caption{Broadband non-degenerate optical parametric amplification in sidewall-poled TFLN waveguides. (a) Experimental setup. The FH pump and signal are combined and modulated into nanosecond pulses. The two fields are subsequently separated, and the FH pump is amplified and frequency-doubled to generate the SH pump. The SH pump and signal are recombined using a fiber wavelength-division multiplexer (WDM), coupled into the chip, and the output spectrum is analyzed with an optical spectrum analyzer (OSA). (b) Measured output spectra for average off-chip SH pump powers ranging from 33~mW to 177~mW at a fixed signal wavelength of approximately 1541~nm. The apparent pump peak arises from a second-order artifact of the OSA generated by the 775~nm SH pump. (c) Extracted on-chip signal and idler powers as a function of the average off-chip SH pump power. (d) Optical parametric generation (OPG) spectrum acquired with the input signal removed, demonstrating spontaneous broadband signal--idler generation. (e) Continuous-wave NOPA spectra obtained by tuning the input signal over a 75~nm wavelength range, demonstrating broadband parametric amplification. }
    \label{fig:gainvpumppower}
\end{figure*}

Having established intrinsic nonlinear performance, we next evaluate the parametric gain performance via phase-insensitive gain and optical parametric generation (OPG). In Fig.~\ref{fig:gainvpumppower}(a), a tunable laser (Toptica CTL) provides the signal and a fixed-wavelength laser (NKT Basik) seeds the FH pump. The two are combined using a 90:10 coupler and modulated into 10-ns pulses at a repetition rate of 5.73\,MHz with a high-speed optical switch (BOA1550PXS). Pulsed operation provides high peak power at low average power, minimizing thermal drifts, photorefractive effects, and associated optical damage. The pump arm is amplified using an erbium-doped fiber amplifier and a frequency-doubled SHG module (NKT Harmonik) to produce the pump at the SH wavelength. The SH pump is then recombined with the signal off-chip and coupled into the device through a lensed fiber. At the output, the amplified signal, generated idler, and residual pump are collected with a second lensed fiber and analyzed using an optical spectrum analyzer (OSA).

 Figure~\ref{fig:gainvpumppower}b shows the measured output spectra for increasing SH pump powers. The signal inset with dashed trace correspond to the transmitted signal without the SH pump. At the maximum available off-chip average SH pump  of 177 mW, we measure a phase-insensitive signal gain of 4.8~dB, corresponding to an estimated phase-sensitive gain of 9.98~dB. The estimated on-chip signal and idler powers versus off-chip average SH pump power (Fig.~\ref{fig:gainvpumppower}(c)) show the signal-idler difference decreasing to a minimum of 1.85~dB at the highest pump power, consistent with the high-gain regime where the signal and idler photon numbers become nearly equal (see eq. \eqref{eq:n_idler_int}, \eqref{eq:n_signal_int} in the supplementary material). 

Finally, we characterize the broadband parametric response. With the input signal removed, the OPG spectrum of Fig.~\ref{fig:gainvpumppower}d, generated solely by spontaneous parametric down-conversion, provides a seed-independent verification of the phase-matching bandwidth, with a 3-dB bandwidth exceeding 10~THz. This broad parametric response is enabled by dispersion engineering, which exhibits a low group-velocity dispersion (GVD) of only $0.16$--$0.22~\mathrm{ps}^2/\mathrm{m}$ across the 1510---1630~nm wavelength range and approximately $0.30~\mathrm{ps}^2/\mathrm{m}$ at the SH pump wavelength of 775~nm. 
Figure~\ref{fig:gainvpumppower}e shows the NOPA measurement with the CW signal tuned from 1558~nm to 1630~nm at a fixed average SH power of 177~mW, demonstrating the device's near-flat gain.

\section{Conclusion and Outlook}
\label{sec:conclusion}
We have introduced a bidirectional measurement framework for their rigorous nonlinear characterization for integrated devices and demonstrate it using low-loss, high-efficiency sidewall-poled nanophotonic lithium niobate waveguides. Our bidirectional framework uses only phase-insensitive classical power measurements to deconvolve facet-dependent coupling losses from the intrinsic nonlinear response. By pumping a cascaded SHG--NOPA interaction from opposite propagation directions and combining a closed-form ratio estimator with a maximum a posteriori estimator constrained by an independent insertion-loss prior, we resolved individual facet coupling efficiencies that are robust to measurement noise and accurate even under strong facet asymmetry, as confirmed by numerical simulation and a controlled VOA verification. Applying this loss-deconvolution to absolute SHG benchmarking, we extracted an unbiased normalized efficiency of $\eta_{\mathrm{norm}} = 1850 \pm 20~\%\,\mathrm{W}^{-1}$, and estimate $56\%$ absolute on-chip SH conversion. Furthermore, we show how our framework further enabled a non-invasive reconstruction of the longitudinal quasi-phase-matching profile: we showed that bidirectional SHG spectra reduce, after discretization, to a two-circle phase-retrieval problem with a provably unique solution, and that the recovered phase of the nonlinear envelope $h(z)$ tracks the independently measured thin-film thickness variation. 

More broadly, the phase-mismatch-resolved response $\eta(\Delta k)$ encodes the modal dispersion of the interacting waves, giving access to further device parameters. For instance, the effective-index difference $n_{2\omega}-n_\omega=\lambda_{\mathrm{pm}}/2\Lambda$ can be determined from the phase-matched wavelength, and the group-velocity mismatch from the SHG acceptance bandwidth. Although this work focuses on SHG, a bidirectional sum-frequency variant would decouple the two input frequencies to map group-velocity dispersion across the full frequency plane, with the facet-loss deconvolution and phase-retrieval machinery carrying over unchanged. Finally, we demonstrate the operational performance of these devices as broadband parametric amplifiers, measuring a phase-insensitive gain of $4.8~\mathrm{dB}$ (corresponding to a phase-sensitive gain of $\sim10~\mathrm{dB}$) and amplification maintained across a signal tuning range $75~\mathrm{nm}$, with seed-off operation that confirms broadband optical parametric generation over 10 THz bandwidth. 

On the fabrication side, we directly compare \textit{poling-before-etching} and \textit{etching-before-poling} (sidewall-poled) processes using a resonator-embedded loss-extraction technique that isolates the poling-induced contribution from background propagation and coupling losses. We find that poling prior to waveguide definition introduces substantial excess propagation loss compared to sidewall poling—which we attribute to domain-dependent differential etching and the resulting increase in sidewall roughness. We use conformal oxide cladding to improve the poling depth and the uniformity of the duty cycle, which we confirm to be $42.5\%\pm1.5\%$ for the sidewall-poled waveguides. We attribute the residual departure from an ideal duty cycle $50\%$ in sidewall poling
to a proximity-driven lateral spread, in which the fringing
fields of adjacent inverted domains overlap most strongly in the interior,
where the domains reach and exceed the target duty cycle and merge before the
fringing field-limited ends fully invert~\cite{Nakamura2002,Krasnokutska2021}.  On the other hand, poled-before-etched waveguides show $55.3\%\pm5.8\%$ with reduced poling uniformity.

Although we partially mitigate this through multiple rounds of poling and confocal microscopy to optimize the length of the device with maximum depth and minimal lateral spread, this can be mitigated further by pre-compensating the electrode fill factor well below $50\%$~\cite{Krasnokutska2021}, segmenting the poled regions into independently optimized poling sections~\cite{Bollmers2025}, terminating the poling pulse via in-situ SHG monitoring~\cite{Rao2019,Niu2020}, or applying bipolar preconditioning pulse sequences that reduce the internal field and suppress lateral domain spread~\cite{NagyReano2020}. 

Taken together, the low-loss sidewall-poling process and the bidirectional metrology framework address two central bottlenecks in scaling TFLN photonics: the trade-off between optical confinement and poling-induced loss, and the conflation of facet coupling with intrinsic device performance in conventional single-pass measurements. The methods reported here are non-destructive, require only classical power detection, and are directly compatible with wafer-level testing, making them well suited to the high-throughput characterization demanded by large-scale photonic circuits in TFLN and beyond. We expect these advances to facilitate the scalable development of nonlinear TFLN devices in applications requiring high modal purity and low loss, including efficient quantum frequency conversion and integrated light sources for quantum information processing and communications.

\section{Acknowledgments}
This work was supported in part by the DARPA INSPIRED Program (D24AC00154-00). The views, opinions, and/or findings expressed are those of the author(s) and should not be interpreted as representing the official views or policies of the U.S. Government or any agency thereof. RN acknowledges the support of the College of Engineering, University of Massachusetts Amherst. The UCF contribution work is partially supported by NSF Industry University Cooperative Research Center (IUCRC) EPICA program. Use of AI tools: The authors used Claude Opus 4.8 (Anthropic) to assist with manuscript preparation, including editing for clarity, LaTeX formatting, and checking notational consistency, as well as to assist in the verification of derivations and the development of numerical simulation and analysis code. All scientific content, analysis, and conclusions were developed and validated by the authors, who take full responsibility for the content of this work.

\bibliography{Final}
\clearpage
\onecolumngrid

\section*{Supplementary Material}

\appendix

\section{Uniqueness of QPM Inversion and Thin-Film Variation Characterization}\label{sec:appendix-uniqueness}
\newtheorem{theorem}{Theorem}
\newtheorem{lemma}[theorem]{Lemma}
\newtheorem{corollary}[theorem]{Corollary}
\theoremstyle{remark}
\newtheorem*{remark}{Remark}
\newcommand{\CC}{\mathbb{C}}
\newcommand{\RR}{\mathbb{R}}
\newcommand{\re}{\operatorname{Re}}

In this section, we show that recovering the longitudinal nonlinear-coupling non-uniformity from bidirectional SHG intensity measurements in a lossy waveguide reduces, after discretization, to a two-circle phase retrieval problem. Consider a PPLN waveguide of length~$L$ with wavevectors $k_\omega$, $k_{2\omega}$, amplitude propagation loss coefficients $\alpha_\omega$, $\alpha_{2\omega}$, target poling period $\Lambda$, and longitudinal nonlinear coupling coefficient
\begin{align}
    d(z) = h(z) e^{i K_g z} + c.c. \label{eq:dz}
\end{align}
with $K_g = \frac{2\pi}{\Lambda}$ and where $h(z)$ is a square-integrable,
complex-valued envelope that varies slowly compared to $\Lambda$ and
absorbs both amplitude deviations (duty-cycle errors) and phase
deviations (domain-wall position errors) from ideal first-order poling. We neglect higher grating harmonics ($m=\pm 3,\pm 5,\dots$) under
the assumption $|\Delta k|\ll 2K_g$.

\noindent{\textbf{Forward propagation.}}
The pump enters at $z = 0$ and decays as
$A_\omega(z) = A_0\,e^{-\alpha_\omega z}$.
The forward SH envelope $A_{2\omega}^{(F)}$, driven by
the nonlinear polarization
$P_{\text{NL}} = d(z)\,A_\omega^2(z)\,e^{i2k_\omega z}$, obeys
\begin{equation}\label{eq:cwe-fwd}
  \frac{dA^{(F)}_{2\omega}}{dz}
  = -\alpha_{2\omega}\,A^{(F)}_{2\omega}
      + i\,A_0^2\;
        h(z)\,e^{-2\alpha_\omega z}\,e^{i\Delta k\,z}.
\end{equation}
Multiplying \eqref{eq:cwe-fwd} by the integrating factor
$e^{\alpha_{2\omega}z}$ and integrating from~$0$ to~$L$ with boundary
condition $A^{(F)}_{2\omega}(0) = 0$ gives
\begin{equation}\label{eq:fwd-sol}
  A^{(F)}_{2\omega}(L)
    = iA_0^2\,e^{-\alpha_{2\omega}L}
      \int_0^L h(z)\,
              e^{-(2\alpha_\omega-\alpha_{2\omega})z}\,
              e^{i\Delta k\,z}\,dz.
\end{equation}
Define the effective loss coefficient
\begin{equation}\label{eq:alpha-eff}
  \alpha \;\equiv\; 2\alpha_\omega - \alpha_{2\omega}.
\end{equation}
The phase mismatch is 
\begin{equation}
    \Delta k=k_{2\omega}-2k_\omega-\frac{2\pi}{\Lambda}.
\end{equation}
Since the prefactor $e^{-\alpha_{2\omega}L}$ is independent of
$\Delta k$, the SHG efficiency is
\begin{equation}\label{eq:etaF}
  \eta_F(\Delta k)
\propto\left|
    \int_0^L h(z)\,e^{-\alpha z}\,e^{i\Delta k\,z}\,dz\right|^2.
\end{equation}

\noindent{\textbf{Backward propagation.}}
For backward pumping, the pump enters at $z=L$ with decay $e^{-2\alpha_\omega(L-z)}$, and the counter-propagating fields select the $h^*(z)e^{-iK_g z}$ component of \eqref{eq:dz} as the driving term. Repeating the integrating-factor argument, the resulting integral is the pointwise conjugate of that of $h(z)\,e^{\alpha z}\,e^{i\Delta k\,z}$, and therefore has equal modulus, i.e.,
\begin{align}\label{eq:etaB}
  \eta_B(\Delta k)
\propto\left|
    \int_0^L h(z)\,e^{\alpha z}\,e^{i\Delta k\,z}\,dz\right|^2.
\end{align}

The forward and backward SHG intensities are thus the squared magnitudes of Fourier-type integrals of $h(z)$ weighted by
$e^{\mp\alpha z}$. Phase retrieval from a single magnitude measurement is ambiguous: the zeros of any solution may be independently reflected across the measurement circle, a result known for the general class of Paley-Wiener functions \cite{akutowicz,walther,hofstetter}. Lemma~\ref{lem:single-circle} gives a self-contained proof for the discrete case, and Lemma~\ref{lem:main} shows that imposing magnitude agreement on two distinct circles forces all zeros to coincide, eliminating the ambiguity.
\begin{lemma}[Single-circle ambiguity]\label{lem:single-circle} Let $F(z)=f[0]\prod_{k=1}^{N-1}(1-z_k z^{-1})$. If $|G(re^{i\omega})|=|F(re^{i\omega})|$ for all $\omega$, then $G$ is obtained by independently replacing each zero $z_k$ with either $z_k$ or $r^2/\bar z_k$, up to a global phase.
\end{lemma}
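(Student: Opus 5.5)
The plan is to work with the squared magnitude on the circle $|z|=r$ as a Laurent polynomial, and to read off its zeros. For a polynomial $P(z)=\sum_{n=0}^{N-1}p[n]z^{-n}$, define the reflected polynomial
\[
\tilde P(z)=\overline{P(r^2/\bar z)}.
\]
On $|z|=r$ we have $r^2/\bar z=z$, so $\tilde P(z)=\overline{P(z)}$ there, and hence $|P(z)|^2=P(z)\tilde P(z)$ on the circle. Both sides of the hypothesis then extend to rational functions:
\[
Q_F(z)=F(z)\tilde F(z),\qquad Q_G(z)=G(z)\tilde G(z).
\]
These agree on the whole circle, which is an infinite set with accumulation points, so by the identity theorem $Q_F\equiv Q_G$ as rational functions on $\CC\setminus\{0\}$.

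The next step is to compute the zeros of $Q_F$. Each factor $(1-z_kz^{-1})$ of $F$ contributes the zero $z_k$. Its reflected counterpart is $\overline{(1-z_k\bar z/r^2)}=1-\bar z_k z/r^2$, which vanishes at $z=r^2/\bar z_k$. So the nonzero zeros of $Q_F$, with multiplicity, form the multiset $\{z_k,\,r^2/\bar z_k\}_{k=1}^{N-1}$. The paired structure is preserved: $z_k$ and $r^2/\bar z_k$ are exchanged by the reflection map.

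Now write $G(z)=g[0]\prod_k(1-w_kz^{-1})$, allowing leading or trailing zero coefficients to account for zeros at $0$ or $\infty$. The zero multiset of $Q_G$ is $\{w_k,\,r^2/\bar w_k\}$, and it must equal that of $Q_F$. A reflection-closed multiset decomposes uniquely into reflection pairs; points on the circle are self-paired and appear with even multiplicity. Matching pairs by induction on $N$ (remove one pair at a time) therefore forces each $w_k\in\{z_k,\,r^2/\bar z_k\}$ after relabeling.

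Finally, I need to confirm that any such choice does reproduce $|F|$ on the circle, which also fixes the leading constant. For $|z|=r$ I will check the factor identity
\[
|1-(r^2/\bar z_k)z^{-1}|=\frac{r}{|z_k|}\,|1-z_kz^{-1}|,
\]
via the direct computation $|z-r^2/\bar z_k|=\frac{r}{|z_k|}|z-z_k|$ when $|z|=r$. So flipping a zero multiplies the magnitude by a constant, which is absorbed into the scalar $g[0]$. Comparing the constants then shows that $|g[0]|$ is determined, leaving only a global phase free.

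The main obstacle is bookkeeping in degenerate cases: zeros at the origin or at infinity (where reflection sends $0\leftrightarrow\infty$ and the degree of $G$ could shift), and repeated zeros or zeros on the circle. I will handle these by treating $F$ and $G$ as elements of the fixed space of degree-$(N-1)$ polynomials in $z^{-1}$, counting zeros at $\infty$ as missing top coefficients, and applying the pair-matching argument to multiplicities.
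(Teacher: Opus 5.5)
Your argument is correct and is essentially the paper's proof: the paper also forms the Laurent polynomial $F(z)\,F^*(r^2/z)$ (your $F\tilde F$), observes that its roots come in pairs $(z_k, r^2/\bar z_k)$, and concludes that equal magnitudes on the circle force $G$ to produce the same Laurent polynomial, so $G$ must take one root from each pair. Your version fills in steps the paper leaves implicit: the identity-theorem justification that $Q_F\equiv Q_G$, the multiplicity and on-circle bookkeeping, the converse with the factor $r/|z_k|$ fixing $|g[0]|$, and the degenerate cases (note that a vanishing top coefficient $f[N-1]$ is a zero at $z=0$, and its reflection shows up as $g[0]=0$).
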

\begin{proof}
The identity $|F(re^{i\omega})|^2 = F(z)\cdot\overline{F[0]}\prod_k(1-\bar z_k z r^{-2})\big|_{z=re^{i\omega}}$ shows that the Laurent polynomial $P_r(z)=F(z)\cdot F^*(r^2/z)$ has roots $(z_k, r^2/\bar z_k)$, organized into $N{-}1$ conjugate-reciprocal pairs. Equal magnitudes on the circle forces $G$ to have the same $P_r$, so $G$ must select one root from each pair.
\end{proof}
\begin{lemma}[Two-circle uniqueness]\label{lem:main} Let $H(z)=\sum_{n=0}^{N-1}h[n]\,z^{-n}$. If $|\tilde H|=|H|$ on circles of radii $r_1\neq r_2>0$, then $\tilde H =\lambda H$ for some $\lambda \in \mathbb{C}$.
\end{lemma}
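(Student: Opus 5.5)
The plan is to turn each magnitude condition into a polynomial identity, take the quotient $R=\tilde H/H$, and show that $R$ must be invariant under a nontrivial dilation of the plane, which forces it to be constant.

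\textbf{Trivial case.} If $H\equiv 0$, then $|\tilde H|=0$ on a circle. A Laurent polynomial vanishing on a circle is identically zero, so $\tilde H\equiv 0=\lambda H$. From now on assume $H\not\equiv 0$.

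\textbf{From moduli to polynomial identities.} Define the reflected polynomial $H^*(w)=\overline{H(\bar w)}=\sum_n \overline{h[n]}\,w^{-n}$. On $|z|=r$ we have $\bar z=r^2/z$, so
\begin{align*}
|H(z)|^2=H(z)\,\overline{H(z)}=H(z)\,H^*(r^2/z).
\end{align*}
This is the same identity used in the proof of Lemma~\ref{lem:single-circle}. The hypothesis $|\tilde H|=|H|$ on $|z|=r_j$ then says that the two Laurent polynomials $\tilde H(z)\tilde H^*(r_j^2/z)$ and $H(z)H^*(r_j^2/z)$ agree on an entire circle. Both are rational functions, so they agree identically on $\CC\setminus\{0\}$, for $j=1,2$.

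\textbf{Ratio and dilation invariance.} Let $R(z)=\tilde H(z)/H(z)$, a nonzero-denominator rational function, and set $R^*(w)=\overline{R(\bar w)}=\tilde H^*(w)/H^*(w)$. Dividing the two identities by $H(z)H^*(r_j^2/z)$ gives
\begin{align*}
R(z)\,R^*(r_1^2/z)=1=R(z)\,R^*(r_2^2/z).
\end{align*}
This holds as rational functions. Since $\tilde H\not\equiv0$ (it has the same modulus as $H\not\equiv 0$ on a circle), $R\not\equiv 0$, and we may cancel $R(z)$ to get $R^*(r_1^2/z)=R^*(r_2^2/z)$. Substituting $w=r_1^2/z$ yields
\begin{align*}
R^*(w)=R^*(cw),\qquad c=r_2^2/r_1^2,
\end{align*}
where $c>0$ and $c\neq1$. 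Conjugating, $R(w)=R(cw)$ as well.

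\textbf{Main step: a dilation-invariant rational function is constant.} Suppose $a\neq 0$ is a zero (or pole) of $R$. Then so is every point $c^{-n}a$, $n\ge0$. These points are all distinct because $|c|\neq1$, which contradicts the finiteness of the zeros and poles of a rational function. Hence $R(z)=\lambda z^m$ for some $\lambda\in\CC$ and $m\in\mathbb{Z}$. Invariance then gives $\lambda z^m=\lambda c^m z^m$, so $c^m=1$, which forces $m=0$ since $c>0$ and $c\neq1$. Therefore $\tilde H=\lambda H$.

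Combining with either modulus identity gives $|\lambda|=1$, so $\lambda$ is a global phase, as claimed. I expect the only delicate point to be the bookkeeping that justifies passing from equality on a circle to equality of rational functions and the cancellation of $R$. This is exactly where $r_1\neq r_2$ enters: if $r_1=r_2$ then $c=1$, the dilation is trivial, and we recover only the zero-flipping ambiguity of Lemma~\ref{lem:single-circle}.
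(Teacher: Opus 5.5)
Your proof is correct, and it takes a different route from the paper's. The paper works zero by zero. It applies Lemma~\ref{lem:single-circle} on each circle, so the zeros of $\tilde H$ are reflections $z_k\mapsto r_j^2/\bar z_k$ of those of $H$. It then chases a displaced zero $z_a$ through alternating reflections to build a chain of zeros with moduli $(r_2/r_1)^{2n}|z_a|$, which contradicts finiteness. You bypass Lemma~\ref{lem:single-circle} entirely. Passing to $R=\tilde H/H$ turns the two modulus conditions into the single functional equation $R(w)=R(cw)$ with $c=r_2^2/r_1^2$, and a rational function invariant under a nontrivial dilation must be constant. The underlying mechanism is the same in both proofs: composing the inversions in the two circles gives the dilation by $(r_2/r_1)^2$, and no finite nonempty set of nonzero points is invariant under it. Your formulation, however, handles the bookkeeping automatically. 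Common zeros cancel in $R$, and multiplicities are carried by the lowest-terms zero and pole sets. The degenerate cases $h[0]=0$ or $h[N-1]=0$, where the product form and the reflection $r^2/\bar z_k$ of Lemma~\ref{lem:single-circle} are ill-defined, are absorbed into your $\lambda z^m$ step and eliminated by $c^m=1$. The paper's chain argument leaves such points implicit. For example, it takes for granted that the new zero $r_1^2/\bar z_a$ is a circle-2 reflection $r_2^2/\bar z_b$ rather than coinciding with an unreflected zero of $H$, and that $z_b$ is itself displaced. In exchange, the paper's route offers interpretation: it presents two-circle uniqueness as the removal of the zero-flipping ambiguity of Lemma~\ref{lem:single-circle}, which is the picture motivating the bidirectional measurement. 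Your argument also gives slightly more than stated: it yields $|\lambda|=1$, and it does not require $\tilde H$ to have the same support as $H$.
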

\begin{proof}
Suppose some zero $z_a$ of $H$ is not a zero of $\tilde H$. By Lemma \ref{lem:single-circle} on circle 1, $\tilde H$ has zero $r_1^2/\bar z_a$ in its place. By Lemma \ref{lem:single-circle} on circle 2, that same new zero equals $r_2^2/\bar z_b$ for some zero $z_b$ of $H$. If $b=a$: then $r_1^2/\bar z_a = r_2^2/\bar z_a$, so $r_1=r_2$, which is a contradiction.
If $b\neq a$: then $|z_b|=(r_2^2/r_1^2)|z_a|$. Note that $z_b$ is also displaced, so repeating the argument gives a chain of distinct zeros with $|z^{(n)}|=(r_2/r_1)^{2n}|z_a|\to 0$. Since a finite-degree polynomial cannot have infinitely many zeros, we have a contradiction. Thus, all zeros agree, giving $\tilde H=\lambda H$.
\end{proof}

We partition $[0,L]$ into $N$ segments of width $\Delta z=L/N$ and define the samples $h[n]=h(n\Delta z)$. Approximating the integrals in \eqref{eq:etaF} and \eqref{eq:etaB} by rectangle sums and introducing the normalized frequency $\phi=\Delta k \Delta z$ gives \begin{align}
    \eta_F(\phi)&\propto \left|\sum_{n=0}^{N-1}h[n]e^{-\alpha n \Delta z} e^{i\phi n}\right|^2\\
        \eta_B(\phi)&\propto \left|\sum_{n=0}^{N-1}h[n]e^{+\alpha n \Delta z} e^{i\phi n}\right|^2.
\end{align}
Define the z-transform $H(z)=\sum_{n=0}^{N-1}h[n]z^{-n}$. We find that 
\begin{align}
    \eta_F(\phi)&\propto \left|H(r_1e^{-i\phi
    })\right|^2\\
    \eta_B(\phi)&\propto \left|H(r_2e^{-i\phi
    })\right|^2
\end{align}
with radii $r_1=e^{\alpha \Delta z}$ and $r_2=e^{-\alpha \Delta z}=1/r_1$: the forward and backward measurements evaluate $|H|$ on reciprocal circles outside and inside the unit circle. 
The radii are distinct whenever $\alpha = 2\alpha_\omega - \alpha_{2\omega}\neq0$, which is generically satisfied since absorption and scattering at the second harmonic differ from twice the fundamental value. When $\alpha$ is small enough that measurement noise degrades conditioning, oversampling and averaging the SH data improves accuracy. As $\phi$ sweeps from $0$ to $2\pi$, the physical phase mismatch traverses $\Delta k_\mathrm{span}=2\pi/\Delta z$, so the spatial resolution is $\Delta z = 2\pi/\Delta k_\mathrm{span}$, where $\Delta k_\mathrm{span}$ is the range of phase mismatch accessible by varying the pump frequency and the PPLN temperature, and the number of resolvable segments is $N=L/\Delta z = L\Delta k_\mathrm{span}/2\pi$. We must measure at $2N-1$ equally spaced normalized frequencies per circle, since $|H(re^{-i\phi})|^2$ is a trigonometric polynomial of degree $N{-}1$ in $\phi$, spaced uniformly in $\Delta k$ based on simulation of $\Delta k(\omega,T)$ for our waveguide geometries.\\

\noindent\textbf{Numerical recovery of $h[n]$.}
We recover the complex-valued samples $h[n]$ by nonlinear least-squares minimization. Writing the $z$-transform matrices $[\mathbf{M}_F]_{m,n} = r_1^{-n}e^{i\phi_m n}$ and $[\mathbf{M}_B]_{m,n} = r_2^{-n}e^{i\phi_m n}$, the model predictions are $|\mathbf{M}_{F}\mathbf{h}|^2$ and $|\mathbf{M}_{B}\mathbf{h}|^2$. Because the proportionality constants relating $\eta_{F}$ and $\eta_{B}$ to $|H|^2$ are unknown, we introduce scale factors $A_F$ and $A_B$ that are determined analytically at each iteration by projecting the data onto the current model
\begin{align}
  A_F = \frac{\langle\eta_F,|\mathbf{M}_F\mathbf{h}|^2\rangle}{\||\mathbf{M}_F\mathbf{h}|^2\|^2},
\end{align}
and likewise for $A_B$. To improve conditioning, the residuals are formulated in amplitude rather than intensity as
\begin{equation}
  e_{F,m} = \sqrt{A_F}\,|[\mathbf{M}_F\mathbf{h}]_m| - \sqrt{\eta_F(\phi_m)}, \qquad
  e_{B,m} = \sqrt{A_B}\,|[\mathbf{M}_B\mathbf{h}]_m| - \sqrt{\eta_B(\phi_m)}.
\end{equation}
A smoothness penalty is appended to the residual vector using second-order finite differences of the amplitude profile,
\begin{equation}
  e_{\mathrm{reg},n} = \mu\bigl(|h[n{-}1]| - 2|h[n]| + |h[n{+}1]|\bigr),
\end{equation}
with regularization weight $\mu = 0.1$. The $2N$ real decision variables (real and imaginary parts of $\mathbf{h}$) are optimized using SciPy's \texttt{least\_squares}, starting from a uniform real-valued initial guess $h[n]=1$.

\section{Bidirectional Cascaded SHG-OPA for Facet-Loss Estimation}\label{app:bd_npa}
\subsection{Cascaded SHG-OPA forward model}\label{app:forward_model}

We work with photon-flux amplitudes $A_j(z)$ such that $|A_j(z)|^2 = n_j(z)$ is the photon number per mode at frequency $j \in \{\omega,\,2\omega,\,s,\,i\}$ at position $z \in [0,L]$. The near-degenerate pump, signal, and idler share the per-length amplitude propagation loss $\alpha_\omega$, and the second harmonic decays at $\alpha_{2\omega}$. Denoting the SHG and OPA per-length nonlinear couplings $\kappa_{\mathrm{SHG}}$ and $\kappa_{\mathrm{OPA}}$, the slowly-varying envelope equations for the cascaded process are
\begin{equation}
    \begin{aligned}
        \frac{d A_\omega}{d z}      &= -\kappa_{\mathrm{SHG}}\,A_\omega A_{2\omega} - \alpha_\omega\,A_\omega, \\
        \frac{d A_{2\omega}}{d z}   &= \tfrac{1}{2}\,\kappa_{\mathrm{SHG}}\,A_\omega^{2} -\alpha_{2\omega}\,A_{2\omega}, \\
        \frac{d A_s}{d z}           &= +\kappa_{\mathrm{OPA}}\,A_{2\omega} A_i-\alpha_\omega A_s, \\
        \frac{d A_i}{d z}           &= +\kappa_{\mathrm{OPA}}\,A_{2\omega} A_s-\alpha_\omega A_i.
    \end{aligned}
    \label{eq:fields}
\end{equation}

Because the SHG and OPA stages share the same waveguide, the same effective nonlinearity $d_{\mathrm{eff}}$, and near-degenerate signal/idler wavelengths, we set $\kappa_{\mathrm{OPA}}=\kappa_{\mathrm{SHG}}$. The boundary conditions on the input facet are $A_\omega(0) = \sqrt{\eta_{\mathrm{in}}\,n_{\omega,\mathrm{src}}}$, $A_s(0) = \sqrt{\eta_{\mathrm{in}}\,n_{s,\mathrm{src}}}$, and $A_{2\omega}(0) = A_i(0) = 0$, where $n_{\omega,\mathrm{src}}$ and $n_{s,\mathrm{src}}$ are the source photon numbers per pulse delivered to the input fiber and $\eta_{\mathrm{in}}$ is the input-facet coupling.

Under the experimental conditions $n_s, n_i \ll n_{2\omega}$ throughout the waveguide, the OPA stage is undepleted. Substituting $A_{s,i}(z)=e^{-\alpha_\omega z}\tilde{A}_{s,i}(z)$ in \eqref{eq:fields} cancels the signal and idler loss terms, so $\tilde{A}_s$ and $\tilde{A}_i$ admit the standard two-mode parametric amplifier solution. The output photon numbers $n_{s,i}(L) = e^{-2\alpha_\omega L}\,|\tilde{A}_{s,i}(L)|^2$ are then
\begin{align}
    n_i(L) &= e^{-2\alpha_\omega L}\bigl(n_s(0) + 1\bigr)\sinh^{2}\bigl(\xi\bigr),    \label{eq:n_idler_int}\\
    n_s(L) &= e^{-2\alpha_\omega L}\bigl[n_s(0)\,\cosh^{2}\!\bigl(\xi\bigr) + \sinh^{2}\!\bigl(\xi\bigr)\bigr],\label{eq:n_signal_int}\\
\intertext{where}
    \xi &= \kappa_{\mathrm{SHG}}\!\int_0^{L}\! |A_{2\omega}(z)|\,dz, \label{eq:xi_def}
\end{align}
with the $+1$ inside the parentheses representing the vacuum seed responsible for spontaneous parametric fluorescence. The idler photon number at the detector is $n_i^{\mathrm{det}} = \eta_{\mathrm{out}}\,n_i(L)$.

In the lossless limit $\alpha_\omega = \alpha_{2\omega} = 0$, the Manley--Rowe relation $n_\omega(z) + 2\,n_{2\omega}(z) + n_s(z) + n_i(z) = n_\omega(0) + n_s(0)$ reduces, in the undepleted-OPA regime, to $n_\omega(z) + 2\,n_{2\omega}(z) \approx n_\omega(0)$, collapsing the SHG sub-system to a single equation with the analytic solution

\begin{align}
    A_{2\omega}(z) &= \sqrt{n_\omega(0)/2}\tanh\left(\kappa_{\mathrm{SHG}}\sqrt{n_\omega(0)/2}z\right),
    \label{eq:pump-depletion}\\
    \xi &= \ln[\cosh\left(\kappa_{\mathrm{SHG}} L\sqrt{n_\omega(0)/2}\right)].
    \label{eq:xi_lossless}
\end{align}
Outside the lossless limit no closed form is available and we integrate the SHG sub-system numerically with independently measured propagation losses as fixed inputs, substituting the resulting $A_{2\omega}(z)$ into~\eqref{eq:n_idler_int}. 

\subsection{Derivation of MAP estimator}\label{app:estimator}
The estimator infers $(\eta_1,\,\eta_2,\,\kappa_{\mathrm{SHG}})$ from a
power sweep of bidirectional measurements in two stages. With forward
($F$) pumping the left facet is the input ($\eta_{\mathrm{in}}=\eta_1$,
$\eta_{\mathrm{out}}=\eta_2$); under backward ($B$) pumping the roles
swap. We work in the seeded regime $\eta_{\mathrm{in}} n_{s} \gg 1$ and
with near-degenerate signal and idler so that the OPA stage is undepleted and
the closed-form \eqref{eq:n_idler_int} applies directly, and wavelength dependent facet loss biases are minimal. \\

\noindent{\bf Ratio estimator.}
Evaluating the ratio of idler powers $R$ in the seeded low-gain limit
($\sinh\xi\approx\xi$, $\eta_{\mathrm{in}}\,n_s \gg 1$) and using off-chip equal
input pump and signal powers reduces it to
\begin{equation}
    \hat{R} =\sqrt{n_{i,F}/n_{i,B}}\approx\frac{\eta_1}{\eta_2},
    \label{eq:R_low_gain}
\end{equation}
where $n_{i,F}$ and $n_{i,B}$ are the forward and backward mean photon numbers \eqref{eq:n_idler_int}.
Combined with the prior mean of the throughput $\eta_1\eta_2$, this
fixes a starting point $(\eta_1^{(0)},\eta_2^{(0)})$ for the MAP optimization. \\
\noindent{\bf MAP estimator.}
The MAP estimator (Eq. \eqref{eq:map} in the main text) acts on the joint likelihood over the swept pump powers $\{P_k\}$ and the two propagation directions.
For pump index $k$ and direction $d\in\{F,B\}$ the 
detected idler mean photon numbers are
\begin{align}
    \bar{n}_{F,k} &= \eta_1\eta_2e^{-2\alpha_\omega L}n_s
        \sinh^{2}\left(\xi(\eta_1,P_k;\kappa_{\mathrm{SHG}})\right),
    \\
    \bar{n}_{B,k}&= \eta_1\eta_2e^{-2\alpha_\omega L}n_s
        \sinh^{2}\left(\xi(\eta_2,P_k;\kappa_{\mathrm{SHG}})\right),
\end{align}
following from $n_i^{\mathrm{det}} = \eta_{\mathrm{out}}\,n_i(L)$ and
\eqref{eq:n_idler_int}. Propagation loss is direction symmetric, so the prefactor
$\eta_1\eta_2e^{-2\alpha_\omega L}n_s$ is identical in both directions and the
saturation argument $\xi$ is the only carrier of asymmetry information.

When $n_s\gg1$, the photon number variances for forward and backward propagation are 
\begin{align}
    \sigma^2_{F,k} &= \bar{n}_{F,k}\left[2\eta_2e^{-2\alpha_\omega L}\sinh^2\!\left(\xi(\eta_1,P_k)\right)+1\right] \text{ and}\\
\sigma^2_{B,k} &= \bar{n}_{B,k}\left[2\eta_1e^{-2\alpha_\omega L}\sinh^2\!\left(\xi(\eta_2,P_k)\right)+1\right],
\end{align}
respectively. 
A direct photodetection measurement on a displaced thermal state with large displacement has an approximately Gaussian distribution, with added electronic noise $\sigma^2_e$. Thus, we model our measurement statistics for forward and backward passes as
\begin{align}
    n_{F,k}&\sim \mathcal{N}(\bar{n}_{F,k},\sigma^2_{F,k} + \sigma^2_e)\\
    n_{B,k}&\sim \mathcal{N}(\bar{n}_{B,k},\sigma^2_{B,k} + \sigma^2_e) \label{eq:bdmap-meas-stats}
\end{align}
and our likelihood function $f$ is defined over the joint distribution of these independent measurement outcomes.

An independent insertion-loss measurement constrains the end-to-end transmission $\eta_1\eta_2 e^{-2\alpha_\omega L}$. Normalizing by the independently measured propagation transmission yields the facet product $\eta_{\mathrm{tot}}\equiv\eta_1\eta_2$, which we encode as a Gaussian prior with relative uncertainty $\sigma_{\mathrm{rel}}$, 
\begin{align}
    p\left(\eta_1\eta_2\right)
    &=\mathcal{N}\!\bigl(\eta_{\mathrm{tot}},\,\sigma_\eta^{2}\bigr),
    \\
    \sigma_\eta &= \sigma_{\mathrm{rel}}\,\eta_{\mathrm{tot}}.
    \label{eq:prior}
\end{align}

We require no prior on $\kappa_{\mathrm{SHG}}$. While measurements from a single pump power cannot separate nonlinear efficiency from coupling efficiency, saturation curvature over a sweep of pump powers breaks this degeneracy, making all three parameters jointly identifiable, as validated through the Monte Carlo simulation in the following subsection. 

\subsection{Simulation performance}\label{subsec:bd_simulation_perf}
We characterize the estimators via a Monte Carlo simulation over a 2D grid of true asymmetry $\eta_1^{(\mathrm{dB})}-\eta_2^{(\mathrm{dB})}\in[-4,4]$~dB and SHG efficiency $\kappa_{\mathrm{SHG}}\in[0.5,15]$, spanning the linear regime through to the onset of saturation at fixed pump. The total intrinsic throughput $\eta_1\eta_2$ is held at $-8$~dB. For each cell we synthesize $400$ trials of forward and backward measurements at three relative pump powers $(0.5,1,2)\times P_{\mathrm{mid}}$ with $P_{\mathrm{mid}}=0.05$~W, doubling to $800$ trials along the row and column corresponding to the 1D slices in Fig.~\ref{fig:sim_performance}. Each idler reading is drawn from a Gaussian with mean and variance given by \eqref{eq:bdmap-meas-stats}, with electronic readout noise $\sigma_e\approx1.5\!\times10^{3}$ photon-equivalents per integration. The MAP estimator uses the same likelihood so that its scatter can be benchmarked directly against the Cram\'er-Rao bound (CRB) obtained by inverting the $3\times3$ Fisher information matrix in $(\ln\eta_1,\ln\eta_2,\ln\kappa_{\mathrm{SHG}})$, including both mean and variance derivatives.

Figure~\ref{fig:sim_performance} summarizes the result. The MAP estimator is unbiased across the entire $(\kappa_{\mathrm{SHG}},\eta_1-\eta_2)$ grid, and its standard deviation tracks the CRB across the saturation-onset portion of the sweep ($\kappa_{\mathrm{SHG}}\gtrsim2$). The ratio estimator agrees with the MAP fit at small $\kappa_{\mathrm{SHG}}$ but acquires a bias of a few tenths of a dB at the highest $\kappa_{\mathrm{SHG}}$ once the seeded low-gain approximation behind \eqref{eq:R_low_gain} breaks down, and its scatter sits roughly $1.5{-}2\times$ above the CRB throughout. Over the full range the MAP estimator is therefore simultaneously unbiased and efficient.

\begin{figure*}
    \centering
    \includegraphics[width=\linewidth]{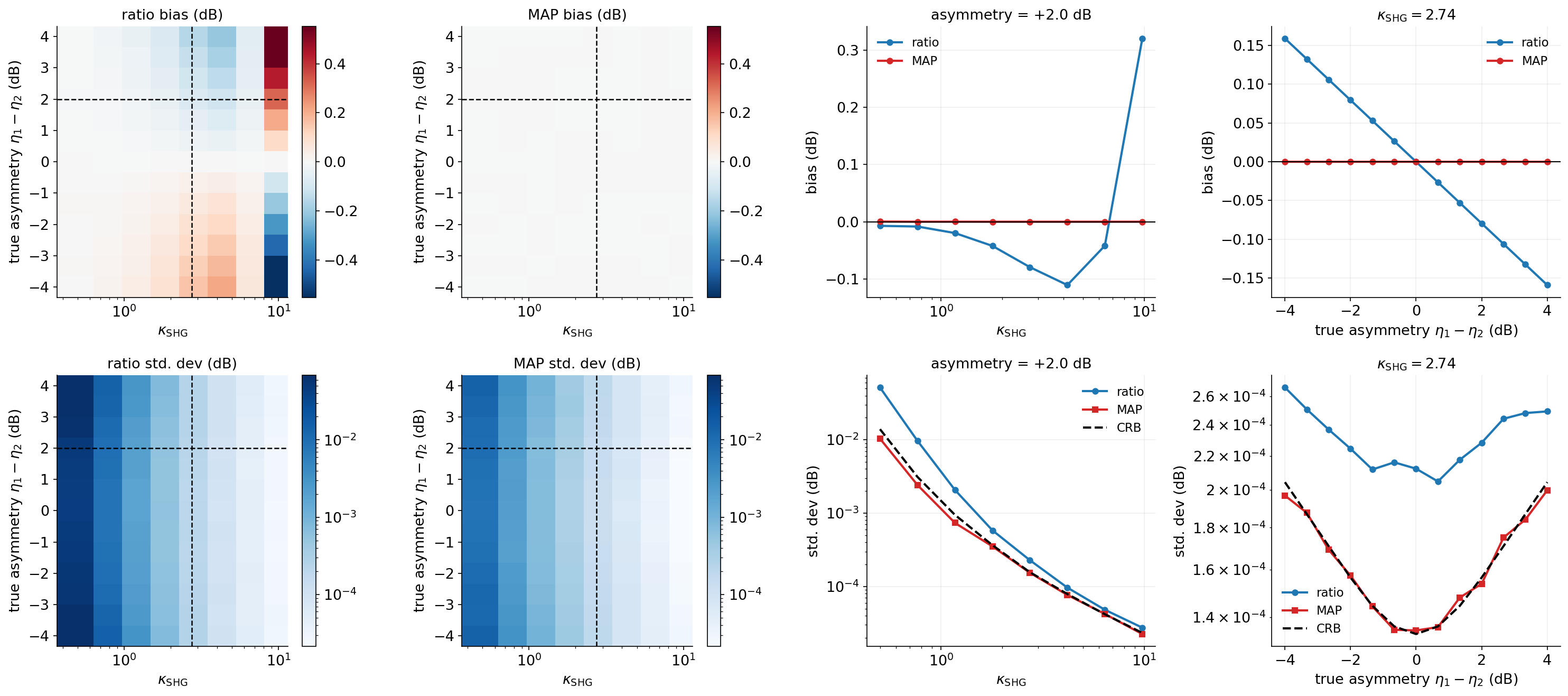}
    \caption{Estimator performance from Monte Carlo simulation across a 2D
    sweep of true asymmetry $\eta_1^{(\mathrm{dB})}-\eta_2^{(\mathrm{dB})}$
    and SHG efficiency $\kappa_{\mathrm{SHG}}\in[0.5,15]$.
    \emph{Left two columns:} bias (top) and standard deviation (bottom)
    of the asymmetry estimate as heatmaps for the ratio and MAP
    estimators; dashed lines mark the 1D slices shown in the right
    panels. \emph{Right two columns:} 1D bias and std.\ dev.\ slices
    against $\kappa_{\mathrm{SHG}}$ at fixed true asymmetry $+2$~dB and
    against true asymmetry at fixed $\kappa_{\mathrm{SHG}}\!=\!3.16$,
    with the CRB overlaid on the std.\ panels. The MAP fit is unbiased
    and hugs the CRB across the swept region; the ratio seed begins to
    deviate once the measurement leaves the seeded low-gain regime.}
    \label{fig:sim_performance}
\end{figure*}
\pagebreak
\section{Fabrication and Methods} \label{app: fab_pole}
\begin{figure}
    \centering
    \includegraphics[width=0.9\linewidth]{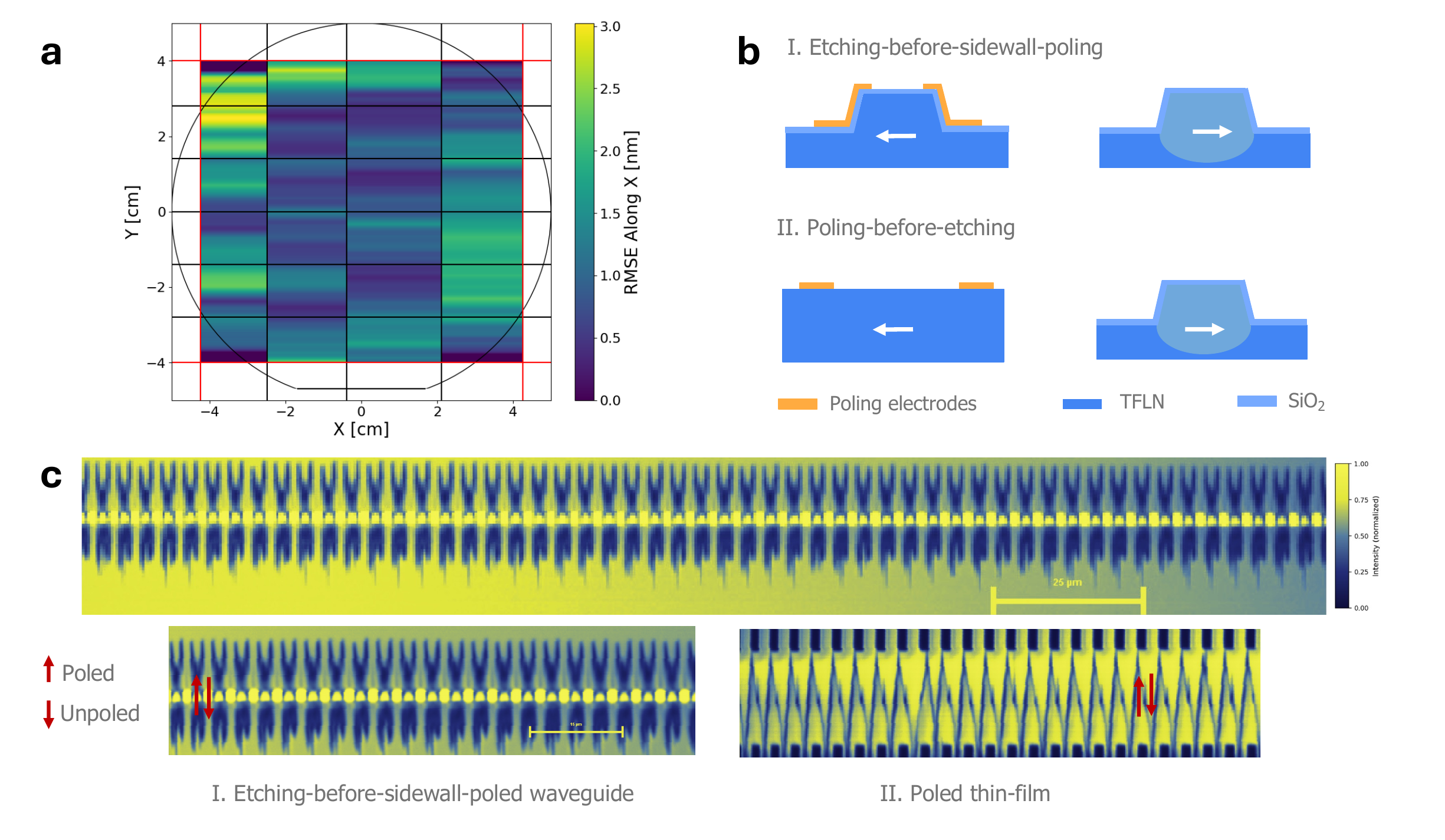}
    \caption{Low-loss sidewall-poling fabrication and propagation-loss characterization. (a) A full-wafer ellipsometry thickness map is used to define the chip dicing layout. The root-mean-square error (RMSE) of the film thickness along the optical propagation direction identifies regions with minimal longitudinal variation in thickness. (b) Comparison of the two fabrication strategies investigated in this work. (I) \textit{Etching-before-sidewall-poling}: waveguides are first defined by dry etching and subsequently periodically poled using sidewall electrodes deposited on a conformal 100-nm-thick SiO$_2$ layer. (II) \textit{Poling-before-etching}: a poled thin film is subsequently patterned into waveguides, followed by SiO$_2$ cladding deposition. (c) Second-harmonic microscopy images of periodically poled devices. Top, centimeter-scale sidewall-poled waveguide exhibiting uniform periodic domain inversion. Bottom, zoomed-in views of (I) an etched-before-poled waveguide and (II) a poled thin film prior to etching, both showing a nearly $50{:}50$ duty cycle and uniform domain contrast.}
    \label{fig:supp_mat_1}
\end{figure}

\subsection{Thin-film characterization} \label{app:tf-char}
Our fabrication process begins with wafer-scale ellipsometric mapping to quantify thickness nonuniformity and define an optimal dicing layout, as illustrated in Fig~\ref{fig:supp_mat_1}a. The process begins with an RMSE-minimizing algorithm that interpolates the wafer's height data onto a regular grid of $z$-values, indexed by $i = 1, \dots, n$ in the $x$-direction and $j = 1, \dots, m$ in the $y$-direction, giving grid points $z_{ij}$. The algorithm is then seeded with a set of $x$ indices $i_{0...d}$ for the initial boundaries, with $i_0=0$ and $i_d=n$. Since propagation predominantly occurs in the $x$ direction, $y$ boundaries have minimal impact on propagation RMSE and can be set arbitrarily afterwards. Once $i_{0...d}$ are given, the algorithm iteratively shifts $i_{1...d-1}$ in either direction to minimize the reward function $J=\sum_{j=1}^m\sum_{k=1}^{d}\sum_{i=i_{k-1}+1}^{i_{k}}w_{ij}(z_{ij}-\mu_{kj})^2$, with $\mu_{kj}=\sum_{i=i_{k-1}+1}^{i_{k}}w_{ij}z_{ij}/\sum_{i=i_{k-1}+1}^{i_{k}}w_{ij}$ being the mean $z$ value of an $x$ line within a die, and $w_{ij}$ being a weighting term for discarding points outside of the wafer. The entire function can be thought of as minimizing the variance length product within each die along the $x$ direction. A corrective term is also added to dissuade die lengths above or below cutoff values. We then select chip regions that minimize the root-mean-square error (RMSE) in film thickness along the optical propagation direction. Periodic poling is then performed exclusively within these low-RMSE regions to suppress phase mismatch arising from thickness variations. In addition, we employ adaptive poling, in which the poling period is locally adjusted to compensate for post-etching residual thickness variations, ensuring high-quality QPM across the entire nonlinear interaction length. Individual chips were diced from these regions and annealed at $520^{\circ}\mathrm{C}$ for approximately 2~h in an inert atmosphere prior to fabrication.  
\label{app: Thin-film charecterization}

\subsection{Fabrication and Poling}
\begin{figure}
    \centering
        \includegraphics[width=0.85\linewidth]{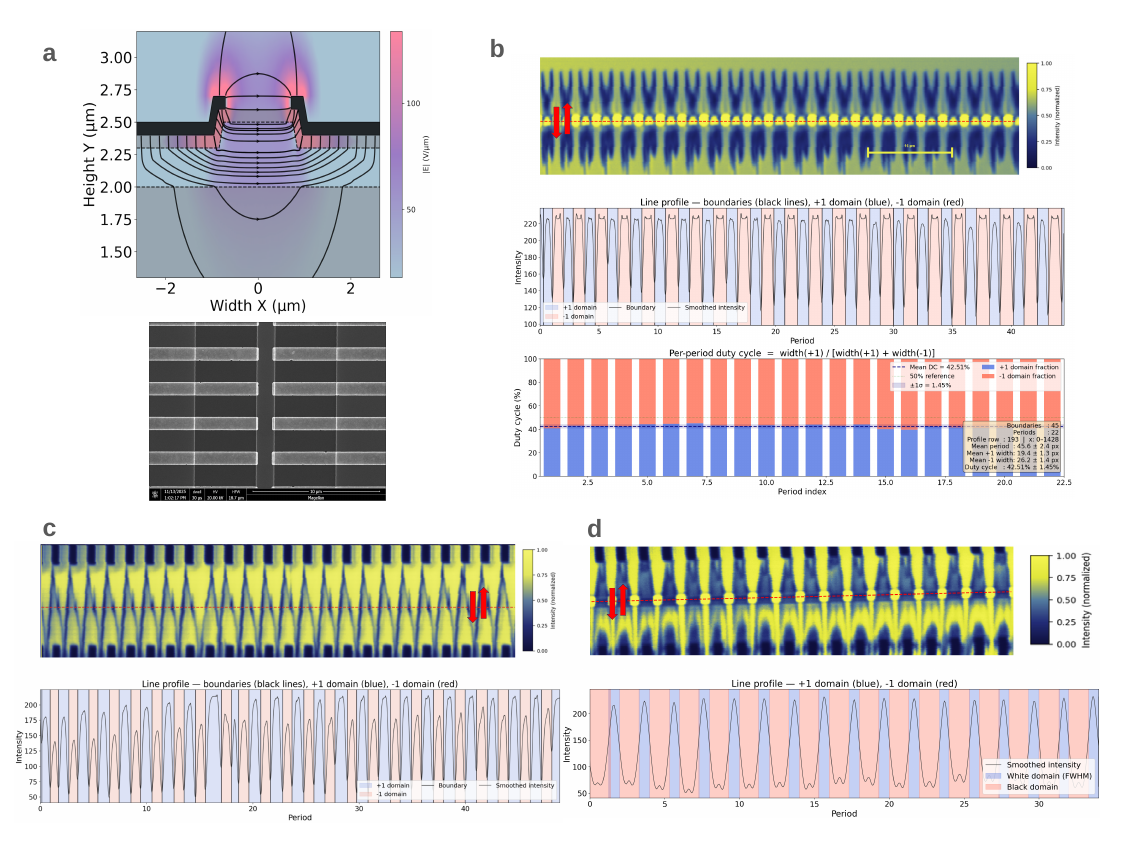}
    \caption{Duty-cycle and domain-inversion analysis for sidewall- and in-plane-poled devices. (a) Top: Simulated electric-field distribution across the waveguide cross-section for a $150~\mathrm{V}$ potential difference applied to the sidewall electrodes, showing strong field penetration throughout both the ridge and slab regions; bottom: scanning electron micrograph of the fabricated sidewall-electrode geometry.  (b--d) Second-harmonic microscopy images and corresponding cross-sectional domain profiles for \textbf{b,} sidewall-poled, \textbf{c,} in-plane poled-before-etching, and \textbf{d,} in-plane poled-after-etching devices, highlighting differences in duty-cycle uniformity and domain-inversion depth among the three poling configurations.}
    \label{fig:supp_mat_fig3}
\end{figure}

We investigate two distinct fabrication strategies, illustrated in Fig.~\ref{fig:supp_mat_1}b. The devices are fabricated on 500-nm-thick MgO-doped x-cut TFLN wafers (NanoLN), with a waveguide top width of $\sim1.2~\mu\mathrm{m}$, an etch depth of $\sim 200~\mathrm{nm}$, and a sidewall angle of 65 degree.  In the first, an \textit{etching-before-poling} process, near-single-mode waveguides designed for operation at $1550~\mathrm{nm}$ are first defined by dry etching. A detailed investigation of the etching and post-processing procedures will be presented in a separate study. Following waveguide fabrication, ellipsometry and profilometry measurements are performed to accurately determine the etched waveguide geometry. These measured dimensions are subsequently used to calculate poling periods, ensuring robust phase matching despite fabrication-induced dimensional variations. Periodic domain inversion is then performed using sidewall electrodes deposited on a conformal $\sim 100$-nm-thick SiO$_2$ layer, as illustrated in Fig.~\ref{fig:supp_mat_1}b(I). After poling, the electrodes are removed, and the devices undergo standard cleaning procedures prior to optical characterization.\\
In contrast, the \textit{poling-before-etching} approach involves periodic poling of the thin film prior to waveguide etching (Fig.~\ref{fig:supp_mat_1}b (II)). The poled film is subsequently etched and clad with a thin SiO$_2$ layer. Although this approach offers fabrication simplicity, including relaxed alignment constraints and a more robust lift-off process, it is inherently less tolerant to fabrication variations introduced during etching. As a result, deviations in the final waveguide geometry can lead to phase mismatches, making efficient nonlinear interactions at the targeted wavelengths challenging. We assess the domain quality using second-harmonic (SH) microscopy~\cite{reitzig2021seeing}. Figure~\ref{fig:supp_mat_1}c (top) shows the periodically poled \textit{etching-before-sidewall-poling} waveguide, exhibiting uniform domain inversion with a nearly $50{:}50$ duty cycle over centimeter-scale lengths. The bottom panels provide zoomed-in views for further comparison, where the downward and upward red arrows denote the unpoled and inverted domains, respectively. The left corresponds to the sidewall-poled waveguide, while the right panel shows the poled thin film prior to etching. In both cases, the central region of the poled structure exhibits a highly uniform domain pattern with a near-$50{:}50$ duty cycle and complete domain inversion, as evidenced by the nearly identical SH intensity from the poled and unpoled domains. 

In Fig.~\ref{fig:supp_mat_fig3}, we present our electrostatic simulations and duty cycle analysis.  Figure~\ref{fig:supp_mat_fig3}a shows that a $150~\mathrm{V}$ bias generates an electric field of approximately $50~\mathrm{kV/mm}$ across both the ridge and slab regions of the waveguide, exceeding the coercive field required for domain inversion in TFLN ($\sim30~\mathrm{kV/mm}$)~\cite{Nagy19}. Sidewall poling (Fig~\ref {fig:supp_mat_fig3}-b) gives a mean duty cycle of $42.5\% \pm 1.5\%$. Sidewall electrodes act as a parallel-plate capacitor with the waveguide between them, producing a confined, uniform electric field across both the slab and ridge regions, thereby improving poling depth and uniformity. \\
Further improvements may be achieved by optimizing the cladding thickness and electrode geometry to enhance the uniformity and penetration depth of the poling field. Figure~\ref{fig:supp_mat_fig3}c shows the poled thin film prior to waveguide etching, exhibiting an average duty cycle of $55.3\% \pm 5.8\%$ and appreciable spatial nonuniformity. A more pronounced limitation arises for the \textit{etching-before-poling} approach using conventional in-plane electrodes positioned away from the waveguide. In this geometry, the electric field is highly nonuniform across the waveguide cross-section and progressively weakens away from the positive electrode, limiting poling depth. As shown in Fig.~\ref{fig:supp_mat_fig3}d, domain inversion is largely confined to the slab region, leaving a substantial fraction of the waveguide ridge uninverted. This partial domain inversion is evident from the reduced SH-microscopy contrast of the inverted domains (black), in marked contrast to the sidewall-poled device in Fig.~\ref{fig:supp_mat_fig3}b, which exhibits substantially more uniform domain inversion across the waveguide cross-section~\cite{Sheltonetal}.

We pattern waveguides using 100-kV electron-beam lithography with a hydrogen silsesquioxane (HSQ) resist. The patterned HSQ served as a hard mask for the dry etching of the lithium niobate layer. The etching process was optimized to achieve an etch rate of approximately $22~\mathrm{nm/min}$ with a target etch depth of $200~\mathrm{nm}$. Following etching, the samples were cleaned in a hot RCA-1 solution ($55$--$60^{\circ}\mathrm{C}$) for 30~min to remove etch redeposition. A $\sim100~\mathrm{nm}$-thick SiO$_2$ cladding layer was subsequently deposited by plasma-enhanced chemical vapor deposition (PECVD), followed by annealing at $300^{\circ}\mathrm{C}$ for 2~hour to improve the oxide quality. Sidewall poling electrodes were then patterned in ZEP520A resist and fabricated by lift-off. Lift-off was performed in N-methyl-2-pyrrolidone (NMP), leaving the final electrode geometry as represented by the SEM image in Fig.~\ref{fig:supp_mat_fig3}(a). Although the SEM image corresponds to the device characterized in Appendix~\ref{app:sidewall-added-loss}, which was fabricated using ZEP rather than HSQ, the electrodes are representative of the sidewall electrodes used on the low-loss waveguides reported in this work. We then used a layer of photoresist to electrically insulate the surface and suppress breakdown during high-voltage sidewall poling. Poling was performed at high temperatures to enhance ferroelectric domain mobility~\cite{Doshietal}. Domain inversion was achieved by applying voltage pulses of $150$--$160~\mathrm{V}$ with millisecond durations. 

\label{app: Fabrication and Poling}
\label{app:fab_methods}
\vspace{-4mm}
\section{Characterization of poling-induced loss}\label{app:sidewall-added-loss}

\begin{figure}
    \centering
    \includegraphics[width=0.88\linewidth]{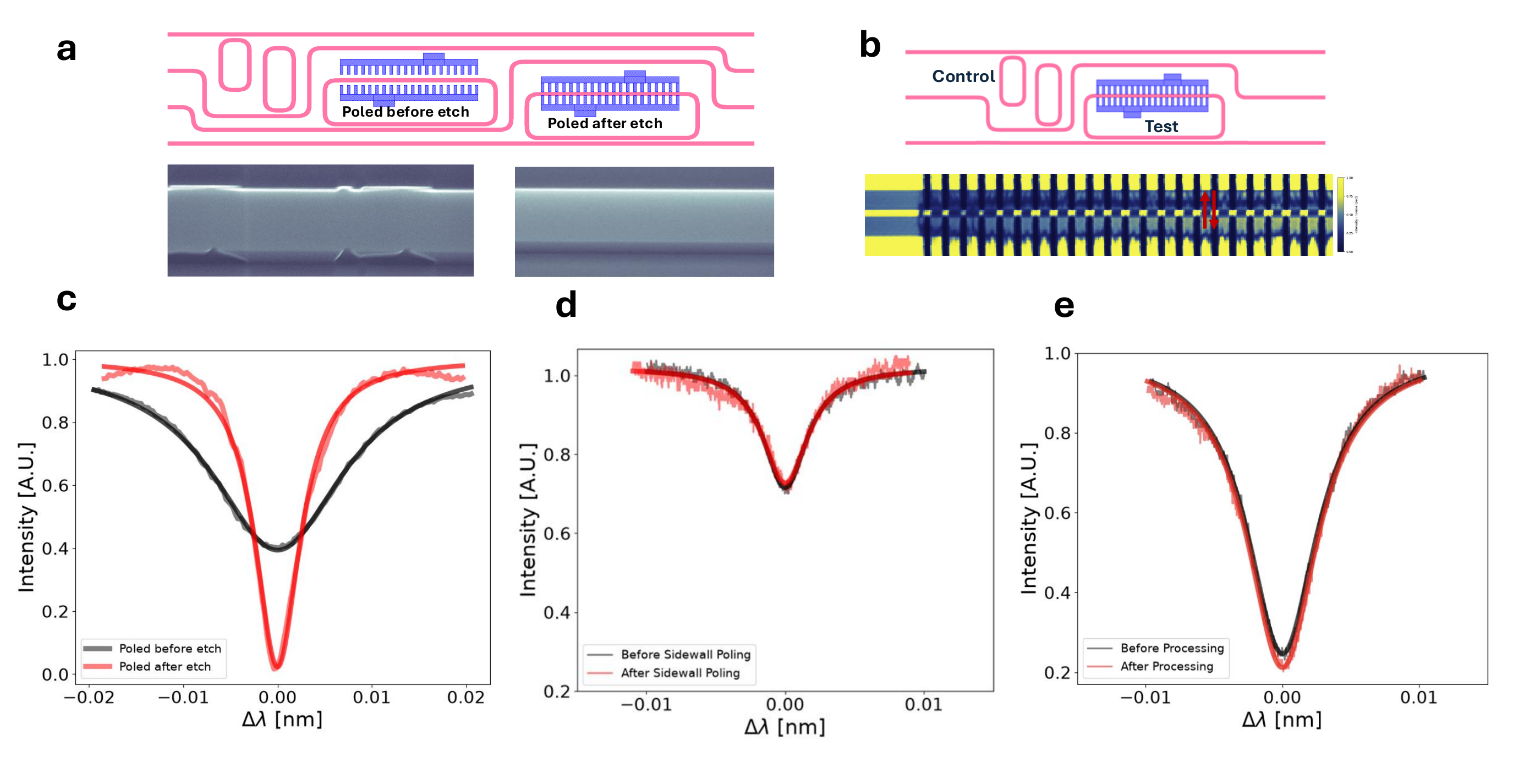}
    \caption{Comparison of propagation losses for the two fabrication strategies. (a) Top, resonator layout used for loss extraction. Bottom, scanning electron microscope images of a poled-before-etched waveguide (left) and an unpoled reference waveguide (right). (b) Top: Resonator layout used for before-and-after characterization. Bottom: Second-harmonic microscopy image of the intentionally over-poled resonator used to characterize the side-poling-induced loss. 
    (c) Normalized resonance spectra comparing the etch-before-sidewall-poled (red) and poled-before-etched (black) waveguides. (d) Resonance spectra of the `Test' resonator from (b) before and after sidewall poling show nearly identical linewidths and extinction ratios, confirming negligible added propagation loss. (e) The `Control' resonator shows the same behavior, confirming the poling process, metal removal, and cleaning steps did not alter propagation loss.}
    \label{fig:supp_mat_2}
\end{figure}

Next, we characterize propagation loss in periodically poled waveguides fabricated with each approach using a resonator-based technique, in which the periodically poled waveguides are embedded within resonators, as per layouts in Fig.~\ref{fig:supp_mat_2}a and b.  Figure~\ref{fig:supp_mat_2}c compares the normalized transmission spectra of resonators fabricated using the two poling strategies. We extract the propagation loss and quality factors from Lorentzian fits to resonator responses~\cite{McKinnon:09,younesi2025fabrication}. The \textit{etching-before-poling} device (red trace) exhibits a high-contrast resonance with an intrinsic quality factor of $Q_i \approx 700~\mathrm{k}$, consistent with low propagation loss in the periodically poled waveguide. 
In contrast, the \textit{poling-before-etching} resonator (black trace) shows a broadened resonance with reduced extinction and $Q_i \approx 119~\mathrm{k}$. We attribute this degradation to enhanced scattering from increased sidewall roughness arising from domain-dependent differential etching in pre-poled regions, which introduces roughness and structural irregularities along the waveguide propagation (see Fig.~\ref{fig:supp_mat_2}a, bottom left). The poled region makes up $26.6\%$ and $33.6\%$ of the total resonator 
length for the sidewall and poled-before-etched resonators, respectively.

Finally, we quantify the propagation loss introduced by the sidewall-poling process by characterizing a resonator before and after poling on a separate chip. To establish a conservative upper bound, the device is intentionally over-poled until the inverted domains exhibit lateral spreading beyond the target duty cycle. Lorentzian fits to the resonance spectra [Fig.~\ref{fig:supp_mat_2}d] yield intrinsic quality factors of $Q_i = 490.7 \pm 81.5~\mathrm{k}$ before poling and $Q_i = 462.3 \pm 70.4~\mathrm{k}$ after poling for the $3344~\mu\mathrm{m}$-long resonator, obtained by averaging $\sim20$ resonances spanning over 10 nm, respectively. The corresponding cavity round-trip losses are $0.28 \pm 0.05~\mathrm{dB}$ before poling and $0.29 \pm 0.04~\mathrm{dB}$ after poling and metal removal, respectively.
This difference is well within the experimental uncertainty, indicating that sidewall poling introduces negligible additional propagation loss. As a further control, we characterized resonators adjacent to the poled devices that underwent the identical fabrication sequence. These control resonators exhibit nearly identical resonance line widths and extinction ratios before and after processing, confirming that the fabrication sequence itself does not measurably degrade waveguide quality. Note that a comprehensive investigation of poling-induced loss in the sidewall-poling approach using high-$Q$ resonators will be presented in future work.

\end{document}